\tikzset{snake it/.style={decorate, decoration=snake}}
\tikzset{defnode/.style={inner sep=1.6pt,circle,black,fill=darkgray}}
\tikzset{invisible/.style={draw = white, circle, fill=white, inner sep=0, minimum size=6pt}}
\newcommand{\G}{\ensuremath{\mathcal{G}}\xspace}
\title{Dismountability in Temporal Cliques Revisited}
\author{Daniele Carnevale}{Gran Sasso Science Institute, L'Aquila, Italy}{}{}{}
\author{Arnaud Casteigts}{University of Geneva, Switzerland}{}{}{}
\author{Timothée Corsini}{LaBRI, University of Bordeaux, France}{}{}{}
\authorrunning{Daniele Carnevale, Arnaud Casteigts, Timothée Corsini}
\keywords{Dynamic networks, Temporal graphs, Reachability, Dismountability, Pivotability, Temporal spanners, Full-range graphs.}
\begin{document}

\maketitle

\begin{abstract}
  A temporal graph is a graph whose edges are available only at certain points in time. It is temporally connected if the nodes can reach each other by paths that traverse the edges chronologically (temporal paths).
  Unlike static graphs, temporal graphs do not always admit small subsets of edges that preserve connectivity (temporal spanners)~--~there exist temporal graphs with $\Theta(n^2)$ edges, all of which are critical. In the case of temporal cliques (the underlying graph is complete), spanners of size $O(n\log n)$ are guaranteed. The original proof of this result by Casteigts et al. [ICALP 2019] combines a number of techniques, one of which is called \emph{dismountability}. In a recent work, Angrick \textit{et al.} [ESA 2024] simplified the proof and showed, among other things, that a one-sided version of dismountability can replace elegantly the second part of the proof (together with a recurrence argument).

  In this paper, we revisit methodically the dismountability principle. We start by characterizing the structure that a temporal clique must have if it is not $1$-hop dismountable, then neither $1$-hop nor $2$-hop (i.e. not $\{1,2\}$-hop) dismountable, and finally not $\{1,2,3\}$-hop dismountable.
  It turns out that if a clique is $k$-hop dismountable for any other $k$, then it must also be $\{1,2,3\}$-hop dismountable, thus no additional structure can be obtained beyond this point. Interestingly, excluding only $1$-hop and $2$-hop dismountability is already sufficient for reducing the spanner problem from cliques to bi-cliques. Put together with the strategy of Angrick et al. for bi-cliques, the entire $O(n \log n)$ result can now be recovered using \emph{only} dismountability. This establishes dismountability as a central technique. The existence of $O(n)$ spanners is not yet settled. However, an interesting by-product of our analysis is that any minimal counter-example to the existence of $4n$ spanners must satisfy the properties of non $\{1,2,3\}$-hop dismountable cliques, which is relevant beyond the $O(n)$ conjecture.

  In the second part, we discuss further connections between dismountability and another technique called pivotability. In particular, we show that if a temporal clique is recursively $k$-hop dismountable, then it is also pivotable (and thus admits a $2n$ spanner, whatever $k$). We also study a family of labelings (called \emph{full-range}) which force both dismountability and pivotability. The latter gives some evidence that large lifetimes could be exploited more generally for the construction of spanners.
\end{abstract}

\section{Introduction}

\noindent For most purposes, a temporal graph can be defined as a triplet $\G=(V,E,\lambda)$, where $(V,E)$ is a standard (in this work, \emph{undirected}) graph called the \emph{footprint} (or underlying graph) of $\G$, and $\lambda:E \to 2^{\mathbb{N}}$ is a labeling function that assigns a non-empty set of time labels to every edge of $E$, interpreted as availability times. A pair $(e,t)$ is a \emph{contact} of \G if $t \in \lambda(e)$. A \emph{temporal path} in $\G$ is a sequence of contacts $\langle (e_i,t_i)\rangle$ such that $\langle e_i \rangle$ is a path in the footprint and $\langle t_i \rangle$ is non-decreasing. 
A temporal graph \G is \emph{temporally connected} ($\G$ is in class \textsf{TC}) if temporal paths exist between all ordered pairs of nodes. It is \emph{minimal} in \textsf{TC} if none of its contacts can be removed without breaking temporal connectivity.

Given a temporal graph $\G=((V,E),\lambda)$, a \emph{temporal subgraph} $\G'\subseteq \G$ is a temporal graph $\G'=((V',E'), \lambda')$ such that $V'\subseteq V$, $E' \subseteq E \cap V'\times V'$, and $\lambda'$ is a subset of $\lambda$ restricted to $E'$. If $\G \in$ \textsf{TC}, $V'=V$, and $\G' \in $ \textsf{TC}, then $\G'$ is a \emph{temporal spanner} of $\G$. Figure~\ref{fig:spanner} shows two possible temporal spanners of a same graph $\G$. One of them has the additional property that its footprint is a tree.

\begin{figure}[h]
  \centering
  \begin{tikzpicture}[xscale=1.4,yscale=.7]
    \tikzstyle{every node}=[defnode]
    \path (0,1) node (a) {};
    \path (1,2) node (b) {};
    \path (2,1) node (c) {};
    \path (1,0) node (d) {};
    \tikzstyle{every node}=[font=\footnotesize]
    \path (a) node[below=.6cm] (g){\large $\G'$};
    \path (c) node[right=.6cm] (inc){\large $\subseteq$};
    \path (a) node[left] (aa){a};
    \path (b) node[above] (bb){b};
    \path (c) node[right] (cc){c};
    \path (d) node[below] (dd){d};
    \tikzstyle{every node}=[fill=white, inner sep=1.5pt, font=\footnotesize]
    \draw (a) -- node {\color{darkgray} 2,7}(b);
    \draw (b) -- node {\color{darkgray} 3}(c);
    \draw (b) -- node {\color{darkgray} 1,4}(d);
  \end{tikzpicture}
  ~
  \begin{tikzpicture}[xscale=1.4,yscale=.7]
    \tikzstyle{every node}=[defnode]
    \path (0,1) node (a) {};
    \path (1,2) node (b) {};
    \path (2,1) node (c) {};
    \path (1,0) node (d) {};
    \tikzstyle{every node}=[font=\footnotesize]
    \path (a) node[below=.6cm] (g){\large $\G$};
    \path (a) node[left] (aa){a};
    \path (b) node[above] (bb){b};
    \path (c) node[right] (cc){c};
    \path (d) node[below] (dd){d};
    \tikzstyle{every node}=[fill=white, inner sep=1.5pt, font=\footnotesize]
    \draw (a) -- node {\color{darkgray} 2,7}(b);
    \draw (b) -- node {\color{darkgray} 3,5}(c);
    \draw (b) -- node {\color{darkgray} 1,4}(d);
    \draw (c) -- node {\color{darkgray} 1,6}(d);
    \draw (a) -- node {\color{darkgray} 4}(d);
  \end{tikzpicture}
  ~
  \begin{tikzpicture}[xscale=1.4,yscale=.7]
    \tikzstyle{every node}=[defnode]
    \path (0,1) node (a) {};
    \path (1,2) node (b) {};
    \path (2,1) node (c) {};
    \path (1,0) node (d) {};
    \tikzstyle{every node}=[font=\footnotesize]
    \path (a) node[below=.6cm] (g){\large $\G''$};
    \path (a) node[left=.6cm] (inc){\large $\supseteq$};
    \path (a) node[left] (aa){a};
    \path (b) node[above] (bb){b};
    \path (c) node[right] (cc){c};
    \path (d) node[below] (dd){d};
    \tikzstyle{every node}=[fill=white, inner sep=1.5pt, font=\footnotesize]
    \draw (a) -- node {\color{darkgray} 2}(b);
    \draw (b) -- node {\color{darkgray} 3}(c);
    \draw (c) -- node {\color{darkgray} 1}(d);
    \draw (a) -- node {\color{darkgray} 4}(d);
  \end{tikzpicture}
  \caption{\label{fig:spanner}A temporally connected graph $\G$ and two temporal spanners $\G'$ and $\G''$.}
\end{figure}

The search for small temporal spanners began in~\cite{KKK00}, where Kempe, Kleinberg, and Kumar observed that \textsf{TC} graphs do not always admit spanning trees (e.g. $\G''$ in Figure~\ref{fig:spanner}), and in fact, some of them do not even admit spanners with $O(n)$ edges (some labelings of the hypercube make all of its $\Theta(n \log n)$ edges critical for connectivity).
More than a decade later, Axiotis and Fotakis~\cite{AF16} presented an infinite family of temporal graphs having ${n \choose 2}/9+\Theta(n)=\Theta(n^2)$ edges, all of which are critical for \textsf{TC}, ruling out even the existence of $o(n^2)$-sparse spanners.

In other words, no sparse analog of the concept of a spanning tree exist in dynamic networks, a fact with deep consequences for concrete applications like routing.

\subsection{Temporal cliques}

The above negative results immediately prompt the question of whether some restricted classes of temporal graphs admit sparse spanners unconditionally. In~\cite{CPS19}, Casteigts, Peters, and Schoeters showed that if the footprint is a complete graph (the input is a \emph{temporal clique}), then $O(n \log n)$-sparse spanners are guaranteed, whatever the labeling. Quite naturally, the hardest cases are when every edge has a single label and adjacent labels are all different, i.e. the labeling is both \emph{simple} and \emph{proper}, respectively. Indeed, using more labels can only increase reachability, and so does having two adjacent edges with the same label, as the times along temporal paths must only be non-decreasing. (If one requires strict temporal paths, non-proper labelings may be worse, but this case is less interesting, as it allows arbitrary graphs to be unsparsifiable by simply having the same label on every edge~\cite{AGMS17,KKK00}.)

Thus, in the context of temporal spanners (and in the present paper), a \emph{temporal clique} refers to a temporal graph $\G=(V,E,\lambda)$ where $(V,E)$ is a complete graph and $\lambda: E \to \mathbb{N}$ is single-valued and locally injective (i.e., simple and proper). If one prefers, one may think of it as having globally unique labels, without consequences for the spanner problem. Temporal bi-cliques are defined analoguously, except that the footprint is a complete bipartite graph.

The $O(n \log n)$ result in~\cite{CPS19} was obtained by a combination of techniques. The first half of the proof involves dismountability and an algorithm called fireworks, which reduces the general problem to the particular case of finding a spanner in a temporal complete bipartite graph with significant extra structure. The second part of the proof shows that such graphs admit $O(n \log n)$ spanners using a technique called iterative delegation. In a recent article, Angrick et al.~\cite{Hasso} show that
the bipartite version of the problem not only captures the hard instances; it is actually equivalent to the original problem in the sense that the worst-case size of a spanner in cliques and such bi-cliques are related by a constant factor. They also show that the second part of the proof involving bi-cliques can be simplified by using a one-sided version of dismountability in place of iterative delegation, breaking the problem into two sub-problems recursively and recovering the $O(n \log n)$ result more simply.

As of today, the status of the problem is unsettled. In particular, it remains open whether temporal cliques always admit $O(n)$ spanners. Some sub-families of cliques do, for example cliques that are pivotable or recursively dismountable~\cite{CPS19} (more on this later), cliques resulting from encounters of agents moving along the line~\cite{HNRV23}, cliques whose labeling is random~\cite{CRRZ21}, and cliques whose labeling obeys various technical restrictions~\cite{Hasso}.

\subsection{Contributions}

In this paper, we revisit methodically the dismountability principle. In its default version, $1$-hop dismountability, this principle says that if a node $v$ is both the latest neighbor of some node and the earliest neighbor of another node, then $v$ can be removed from the instance at the cost of including two edges in the spanner, and one can subsequently recurse on $\G \setminus \{v\}$. This technique can be generalized to $k$-hop dismountability, when a similar relation applies between $v$ and other nodes through temporal paths of length at most $k$. In this case, the number of edges added to the spanner before recursion is at most $2k$, which implies that minimal counter-examples to the existence of $2kn$ spanners cannot be $k$-hop dismountable.

We start by characterizing the structure of temporal cliques which are not $1$-hop dismountable, then neither $1$-hop nor $2$-hop dismountable (i.e., not $\{1,2\}$-hop dismountable), and finally not $\{1,2,3\}$-hop dismountable. This structure is characterized in a necessary and sufficient way. Quite surprisingly, we show that if a clique is $k$-hop dismountable for $k>3$, then it must also be $k$-hop dismountable for some $k\le 3$, which implies that one does not need to analyse the structure of non $k$-hop dismountability beyond this point. For technical reasons, our analysis actually implies that any counter example to the existence of $4n$ (not just $6n$) spanners must possess all the characterized properties.

If one does not care about the multiplicative constant within $O(n)$ spanners, then it turns out that excluding only $1$-hop and $2$-hop dismountability is already sufficient for reducing the spanner problem from cliques to bi-cliques. This reduction replaces the fireworks algorithm from~\cite{CPS19} with much simpler arguments, and unlike~\cite{Hasso}, it does so while preserving the number of vertices, which may be useful in future studies about the precise size of a spanner.

More generally, the main outcome of our study is that the entire $O(n \log n)$ result can now be recovered using dismountability only, with a simple and unified constructive algorithm, establishing dismountability as a central technique for the problem.

In the second part of the paper, we further explore the properties of (non-)dismountability, some of its possible relaxations and its connections to other techniques, including pivotability. In particular, we show that if a temporal clique is recursively $k$-hop dismountable, then it is also pivotable.
We also define the concept of a \emph{(time-)compressed} temporal graph and identify a new family of labelings called \emph{full-range}, which corresponds to compressed temporal graphs whose lifetime (interval of existence) is equal to the number of edges. We prove that full-range cliques are $\{1,2,3\}$-hop dismountable, and full-range temporal graphs in \textsf{TC} (not necessarily cliques, but including all temporal cliques) are pivotable. These results give some evidence that having a large lifetime could be helpful for the construction of spanners.

\subsection{Further related works}
Beyond structural results, temporal spanners have attracted significant attention over the last decade. On the algorithmic side, the problem of minimizing the number of labels of a temporal spanner was shown \textsf{APX}-hard in the ``non-simple, non-proper, strict'' setting~\cite{AGMS17} and in the ``simple, non-proper, non-strict'' setting~\cite{AF16}, both settings being incomparable in terms of expressivity~\cite{CCS24}. Observe that in the second setting, the problem coincides with that of minimizing the number of edges (as the graph is simple), so this version of the problem is~\textsf{APX}-hard as well. The status of the problem is open for general temporal graphs which are both simple and proper. Finally, deciding if a given temporal graph admits a temporal spanning tree is \textsf{NP}-complete in the ``non-simple, proper'' setting (regardless of whether strict or non-strict paths are considered)~\cite{CC24}.

Observe that most of these works are only concerned with the \emph{size} of the spanners, which contrasts with typical spanner problems in static graphs, where solutions of size $n-1$ (spanning trees) are \emph{de facto} guaranteed and further criteria are thus considered (typically, weights and tradeoffs between density and stretch factors, see e.g.~\cite{static1,static2,static3,static4}). The reader interested additional criteria in a temporal setting is referred to~\cite{BDG+22a} (stretch factors), \cite{BDG+22b} (fault-tolerance), and~\cite{branchings} (timing criteria).

  On the structural side, a few other related results were obtained. In~\cite{CRRZ21}, Casteigts, Raskin, Renken, and Zamaraev show that any counter-example family to linear-size spanners (such as the graphs from~\cite{KKK00} and in~\cite{AF16}) are statistically insignificant. Namely, temporal Erdös-Rényi graphs almost surely admit spanners of size $2n+o(n)$ as soon as they become temporally connected (at $p=(3+\epsilon)\log n/n$).
  Interestingly, results from the 80-90s in gossip theory can be re-interpreted in the framework of temporal graphs, shedding light on the structure of minimal temporal graphs in \textsf{TC}. For example, the characterization of minimal information flows in~\cite{labahn93} implies a similar quadratic bound as the infinite family from~\cite{AF16}, once interpreted in the setting of temporal graphs.

  Finally, some recent work in network creation games directly exploit the existence of $O(n\log n)$-sparse spanners in temporal cliques~\cite{game-theory}, in a game-theoretic context.

  \subsection{Organization of the document}

  The document is organized as follows. In Section~\ref{sec:dismountability}, we recall the basic definitions of $1$-hop and $k$-hop dismountability. In Section~\ref{sec:non-dismountability}, we characterize the structure of non $k$-hop dismountable graphs for $k\le 3$ and we show that considering larger values of $k$ is unnecessary. In Section~\ref{sec:spanners}, we summarize the overall proof of existence of $O(n \log n)$ spanner in temporal cliques, using only dismountability arguments. In Section~\ref{sec:misc}, we provide further results about dismountability, pivotability, and full-range temporal graphs. We conclude in Section~\ref{sec:conclusion} with some remarks and open questions.

  \section{Dismountability}
  \label{sec:dismountability}

To start, let us recall the principle of dismountability defined in~\cite{CPS19}.
Let $\G=(V,E,\lambda)$ be a (simple and proper) temporal clique. For any node $v\in V$, let us denote by $e^-(v)$ the earliest edge incident to $v$ and by $n^-(v)$ the corresponding neighbor (its earliest neighbor).
We define similarly the latest edge $e^+(v)$ and latest neighbor $n^+(v)$.

Because $\G$ is a clique, the fact that $n^-(v)=u$ for some $u$ implies that $u$ can reach every other node through $v$ (indeed, $v$ still has a later edge with every other node after time $\lambda(uv)$). Likewise, if $n^+(w)=u$, then every node can reach $u$ through $w$.
Let $u,v,w$ be three nodes that satisfy $n^-(v)=u$ and $n^+(w)=u$. Then $u$ can {\em delegate} its emissions to $v$, and delegate its receptions to $w$.
These observations suggest a possible approach for self-reducing the problem of finding a temporal spanner as follows:

\begin{theorem}[$1$-hop dismountability~\cite{CPS19}]
  \label{thm:dismountability}
  Let $\G$ be a temporal clique, and let $u, v, w$ be three nodes such that $n^-(v)=u$ and $n^+(w)=u$. Let $S'$ be a temporal spanner of $\G\setminus \{u\}$ 
  Then $S:=S'\cup \{uv, uw\}$ is a temporal spanner of $\G$.
\end{theorem}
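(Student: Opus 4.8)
The plan is to verify directly that $S = S' \cup \{uv,uw\}$ is temporally connected, i.e.\ that every ordered pair of nodes of $V$ is joined by a temporal path lying in $S$. Since $S' \subseteq S$ and $S'$, being a temporal spanner of $\G \setminus \{u\}$, already connects every ordered pair of nodes inside $V \setminus \{u\}$, the only pairs that need attention are those involving $u$: namely $(u,x)$ and $(x,u)$ for $x \in V \setminus \{u\}$.

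Before treating these, I would record the two structural observations that make the delegation work. Since $e^-(v) = uv$ and the labeling is proper, every edge incident to $v$ other than $uv$ has a label strictly larger than $\lambda(uv)$; hence in $\G \setminus \{u\}$ node $v$ has no incident edge with label $\le \lambda(uv)$, and so every temporal path of $S'$ starting at $v$ has its first contact at a time $> \lambda(uv)$. Symmetrically, from $e^+(w) = uw$ we get that every edge incident to $w$ other than $uw$ has label $< \lambda(uw)$, so every temporal path of $S'$ ending at $w$ has its last contact at a time $< \lambda(uw)$.

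With these in hand the remaining cases are immediate. For a pair $(u,x)$ with $x \neq u$, take a temporal path $P$ from $v$ to $x$ in $S'$ (it exists, and is empty when $x = v$) and prepend the contact $(uv,\lambda(uv))$: by the first observation the sequence of times stays non-decreasing, so we obtain a temporal path from $u$ to $x$ in $S$. Symmetrically, for a pair $(x,u)$ with $x \neq u$, take a temporal path from $x$ to $w$ in $S'$ and append the contact $(uw,\lambda(uw))$, which by the second observation yields a valid temporal path from $x$ to $u$ in $S$. The degenerate case $v = w$ (in which $\{uv,uw\}$ is a single edge) is covered by the same reasoning.

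I do not expect a real obstacle here: the statement merely formalizes the ``$u$ delegates its emissions to $v$ and its receptions to $w$'' intuition stated just above it. The only point requiring care is the monotonicity of times at the two junction contacts $(uv,\lambda(uv))$ and $(uw,\lambda(uw))$, which is precisely what $n^-(v)=u$ and $n^+(w)=u$ (together with properness) guarantee; and since $u \notin V(S')$, no temporal path borrowed from $S'$ can secretly revisit $u$.
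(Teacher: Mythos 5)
Your proof is correct and takes essentially the same approach as the paper, which states this theorem as a recollection from~\cite{CPS19} and only sketches the delegation argument in the paragraph preceding it ($u$ delegates emissions to $v$ and receptions to $w$). Your write-up formalizes that sketch faithfully, and rightly pins down the one point needing care: properness plus $e^-(v)=uv$ (resp.\ $e^+(w)=uw$) forces every edge of $S'$ incident to $v$ (resp.\ $w$) to be strictly later (resp.\ earlier) than $\lambda(uv)$ (resp.\ $\lambda(uw)$), so the prepended and appended contacts preserve time-monotonicity.
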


In this case, we say that node $u$ is $1$-hop dismountable, and by extension, a graph is $1$-hop dismountable if at least one node is $1$-hop dismountable. Such a node may or may not exist, and the recursion could thus be stuck at some point. A graph is {\em recursively $1$-hop dismountable} if one can find an ordering of $V$ that allows for a complete recursive $1$-hop dismounting of the graph (down to $n=2$).
An example of recursively $1$-hop dismountable graph, together with the resulting spanner, is shown in Figure~\ref{fig:full_dismountability}.

\begin{figure}[ht]
\centering
     \begin{subfigure}[b]{0.22\textwidth}
          \centering
          \resizebox{\linewidth}{!}{
          \centering
          \begin{tikzpicture}[scale=2.5]
    \path (0,0) coordinate (m);
    \tikzstyle{every node}=[circle,fill=white,inner sep=1.3pt]
    \path (m)+(90:1) node[defnode, fill=gray] (a){};
    \path (m)+(90-72:1) node[defnode] (b){};
    \path (m)+(90-144:1) node[defnode] (c){};
    \path (m)+(90+144:1) node[defnode] (d){};
    \path (m)+(90+72:1) node[defnode] (e){};

    \tikzstyle{every node}=[circle,fill=white,inner sep=1.3pt,font=\LARGE]
    \tikzstyle{every path}=[thin,gray,shorten >= 5pt, shorten <= 5pt]
    \tikzstyle{chain}=[ultra thick, black,shorten <= 5pt, shorten >= 5pt]
    \draw (a) edge[chain] node[pos=.5]{9}(e);
    \draw (a)-- node[pos=.5]{0}(b);
    \draw (a) edge[chain] node[pos=.5]{3}(c);
    \draw (a)-- node[pos=.5]{2}(d);
    \draw (b) --node[pos=.5]{4}(c);
    \draw (b) -- node[pos=.5]{6}(d);
    \draw (b) -- node[pos=.5]{1}(e);
    \draw (c) -- node[pos=.5]{5}(d);
    \draw (c)-- node[pos=.5]{7}(e);
    \draw (d) --node[pos=.5]{8}(e);

  \end{tikzpicture}
          }
          \label{fig:first_dism}
     \end{subfigure}
     \begin{subfigure}[b]{0.22\textwidth}
          \centering
          \resizebox{\linewidth}{!}{
          \centering
          \begin{tikzpicture}[scale=2.5]
    \path (0,0) coordinate (m);
    \tikzstyle{every node}=[circle,fill=white,inner sep=1.3pt]
    \path (m)+(90-72:1) node[defnode] (b){};
    \path (m)+(90-144:1) node[defnode] (c){};
    \path (m)+(90+144:1) node[defnode] (d){};
    \path (m)+(90+72:1) node[defnode, fill=gray] (e){};

    \tikzstyle{every node}=[circle,fill=white,inner sep=1.3pt,font=\LARGE]
    \tikzstyle{every path}=[thin,gray,shorten >= 5pt, shorten <= 5pt]
    \tikzstyle{chain}=[ultra thick, black,shorten <= 5pt, shorten >= 5pt]
    \draw (b) --node[pos=.5]{4}(c);
    \draw (b) -- node[pos=.5]{6}(d);
    \draw (b) edge[chain] node[pos=.5]{1}(e);
    \draw (c) -- node[pos=.5]{5}(d);
    \draw (c) edge[chain] node[pos=.5]{7}(e);
    \draw (d) --node[pos=.5]{8}(e);

  \end{tikzpicture}
          }
          \label{fig:second_dism}
     \end{subfigure}
     \begin{subfigure}[b]{0.2\textwidth}
          \resizebox{\linewidth}{!}{
          \begin{tikzpicture}[scale=3]
    \path (0,0) coordinate (m);
    \tikzstyle{every node}=[circle,fill=white,inner sep=1.3pt]
    \path (m)+(90-72:1) node[defnode, fill=gray] (b){};
    \path (m)+(90-144:1) node[defnode] (c){};
    \path (m)+(90+144:1) node[defnode] (d){};

    \tikzstyle{every node}=[circle,fill=white,inner sep=1.3pt,font=\LARGE]
    \tikzstyle{every path}=[thin,gray,shorten >= 5pt, shorten <= 5pt]
    \tikzstyle{chain}=[ultra thick, black,shorten <= 5pt, shorten >= 5pt]
    \draw (b) edge[chain] node[pos=.5]{4}(c);
    \draw (b) edge[chain] node[pos=.5]{6}(d);
    \draw (c) edge node[pos=.5]{5}(d);

  \end{tikzpicture}
          }
          \label{fig:K3}
     \end{subfigure}
     \hspace{20pt}
     \begin{subfigure}[b]{0.22\textwidth}
          \resizebox{\linewidth}{!}{
          \begin{tikzpicture}[scale=2.5]
    \path (0,0) coordinate (m);
    \tikzstyle{every node}=[circle,fill=white,inner sep=1.3pt]
    \path (m)+(90:1) node[defnode] (a){};
    \path (m)+(90-72:1) node[defnode] (b){};
    \path (m)+(90-144:1) node[defnode] (c){};
    \path (m)+(90+144:1) node[defnode] (d){};
    \path (m)+(90+72:1) node[defnode] (e){};

    \tikzstyle{every node}=[circle,fill=white,inner sep=1.3pt,font=\LARGE]
    \tikzstyle{every path}=[shorten >= 5pt, shorten <= 5pt]
    \tikzstyle{chain}=[ultra thick, shorten <= 5pt, shorten >= 5pt]
    \draw [dashed, gray](a) -- node[pos=.5]{0}(b);
    \draw (a)edge[chain] node[pos=.5]{3}(c);
    \draw [dashed, gray](a)-- node[pos=.5]{2}(d);
    \draw (a) edge[chain] node[pos=.5]{9}(e);
    \draw (b) edge[chain] node[pos=.5]{4}(c);
    \draw (b) edge[chain] node[pos=.5]{6}(d);
    \draw (b) edge[chain] node[pos=.5]{1}(e);
    \draw (c) edge[chain] node[pos=.5]{5}(d);
    \draw (c) edge[chain] node[pos=.5]{7}(e);
    \draw [dashed, gray](d) -- node[pos=.5]{8}(e);

  \end{tikzpicture}
          }
          \label{fig:full_dism_spanner}
     \end{subfigure}
     \caption{\label{fig:full_dismountability} A recursively $1$-hop dismountable graph and the resulting spanner (Fig. from~\cite{CPS19}).}
 \end{figure}

 Observe that exactly $2$ edges are included in the spanner in every dismounting step, which results in a spanner of size $2n-3$ if the graph is recursively $1$-hop dismountable.

\subsection{$k$-hop dismountability}
The dismountability technique can be generalized to multi-hop temporal paths. Here, we recall the ideas, referring again the reader to~\cite{CPS19} for more details. Let $P^-$ be a temporal path from some node $u$ to some node $v$, such that this path arrives at $v$ through $e^-(v)$, then $u$ can delegate its emissions to $v$ in the same way as for $1$-hop dismountability. The edges along the path may or may not be locally earliest, what matters is only that the edge incident to $v$ is.
Similarly, let $P^+$ be a temporal path from some node $w$ to $u$ that departs from $w$ through $e^+(w)$, then $w$ can reach $u$ after having been reached by all the other nodes. As before, we can now select the edges of both paths in the spanner, remove $u$ (and only $u$), and recurse in $\G':=\G \setminus \{u\}$. Whatever spanner exist in $\G'$ plus the edges of the two paths will be a valid spanner of $\G$. The fact that one can recurse ``obliviously'' here (i.e. without taking into account which specific paths were used) is less immediate than before; in particular, the edge $e^-(v)$ may still exist in $\G \setminus \{u\}$ and $v$ could use it for reaching other nodes instead of departing from a local edge after $\lambda(e^-(v))$. But in this case, a path from $u$ to that node would still be preserved through $n^-(v)$.

If both paths have length $\le k$ for some $k$, then $u$ is $k$-hop dismountable (and by extension, the graph is $k$-hop dismountable).
See Figure~\ref{fig:dismountability} for an illustration.

\begin{figure}[h]
  \centering
\begin{subfigure}[b]{0.30\textwidth}
  \centering
\begin{tikzpicture}[scale=2]
    \path (0,0) coordinate (m);
    \draw (0,0) ellipse (1cm and .5cm) node[invisible,anchor=south] {};
    \tikzstyle{every node}=[circle,fill=white,inner sep=1.3pt]
    \path (m)+(-.5,0) node[defnode] (v) {};
    \path (m)+(.5,0) node[defnode] (w) {};
    \path (m)+(0,-.7) node[defnode] (u){};
    \draw (v)+(30:.15) node[invisible,anchor=north] {$v$};
    \draw (u)+(30:.15) node[invisible,anchor=north] {$u$};
    \draw (w)+(30:.15) node[invisible,anchor=north] {$w$};
    \tikzstyle{every node}=[fill=white,inner sep=0pt]
    \tikzstyle{every path}=[shorten >= 1pt, shorten <= 1pt]

    \draw (u)-- node[pos=.8,left=2.2pt]{\small $-$}(v);
    \draw (u)-- node[pos=.8,right=2.2pt]{\small $+$}(w);
  \end{tikzpicture}
\caption{$1$-hop dismountability.}
\end{subfigure}
\begin{subfigure}[b]{0.38\textwidth}
  \centering
\begin{tikzpicture}[scale=1.5]
    \draw (0,0) ellipse (1.25cm and .75cm) node[invisible,anchor=south] {};
    \tikzstyle{every node}=[circle,fill=white,inner sep=1.3pt]
    \path (0,-1) node[defnode] (u) {};
    \path (-.6,-.3) node[defnode] (passby2) {};
    \path (.6,-.3) node[defnode] (passby) {};
    \path (-.6,.4) node[defnode] (v) {};
    \path (.6,.4) node[defnode] (w) {};
    \draw (v) node[right=3pt,invisible] {$v$};
    \draw (u) node[right=3pt,invisible] {$u$};
    \draw (w) node[right=3pt,invisible] {$w$};
    \tikzstyle{every node}=[circle,fill=white,inner sep=0pt]
    \tikzstyle{every path}=[shorten >= 1pt, shorten <= 1pt]

    \draw (w)-- node[rectangle,inner sep=.5pt,pos=.25,right=2pt]{\footnotesize $+$}(passby);
    \draw (passby)-- (u);
    \draw (passby2)-- node[rectangle,inner sep=0.5pt,pos=.75,left]{\footnotesize $-$}(v);
    \draw (u)-- (passby2);
    \draw[gray!80,semithick] (u)+(-.2,.4) edge[->,shorten <= 0pt, shorten >= 10pt, bend left=20] ([xshift=5pt,yshift=-2pt]v);
    \draw[gray!80,semithick] (u)+(.2,.4) edge[<-,shorten <= 0pt, shorten >= 10pt, bend right=20] ([xshift=-5pt,yshift=-2pt]w);
  \end{tikzpicture}
\caption{Example of $2$-hop dismountability.}
\end{subfigure}
\caption{\label{fig:dismountability}Illustration of the principle of dismountability and $k$-hop dismountability, where $-$ and $+$ denote the local earliest and latest edges, respectively.}
\end{figure}

When $k$-hop dismounting a node, the number of edges to be included in the spanner is at most $2k$. Thus, if a graph is recursively $k$-hop dismountable for some \emph{constant} $k$, then this graph admits a spanner of size $O(n)$. Otherwise, the recursion stops at a point where the clique must have additional structure, which we characterize completely in this paper.

\section{Structure of non-dismountable graphs}
\label{sec:non-dismountability}

In this section, we characterize the properties that a temporal clique must have if it is not $1$-hop dismountable. We then proceed similarly for cliques that are neither $1$-hop nor $2$-hop dismountable (i.e., not $\{1,2\}$-hop dismountable), and finally not $\{1,2,3\}$-hop dismountable. Then, we prove that if a clique is $k$-hop dismountable for $k>3$, then it must also be $\{1,2,3\}$-hop dismountable, which implies that no further structure is to be expected for larger values of $k$. Our characterization is necessary and sufficient.


\subsection{Non 1-hop dismountable cliques}

Let $V^-=\{n^-(v) : v\in V\}$ be the nodes that are the earliest neighbor of at least one node, and $V^+=\{n^+(v) : v\in V\}$ be the nodes that are the latest neighbor of at least one node. Finally, let $V^0=V\setminus (V^- \cup V^+)$ be the nodes that are neither in $V^-$ nor in $V^+$.

\begin{lemma}
  If $\G$ is not $1$-hop dismountable, then $\{V^-, V^+, V^0\}$ form a partition of $V$.
\end{lemma}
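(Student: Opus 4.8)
The plan is to show that $V^-$, $V^+$, and $V^0$ are pairwise disjoint and cover $V$; since $V^0 = V \setminus (V^- \cup V^+)$ by definition, coverage is automatic and $V^0$ is disjoint from each of $V^-$ and $V^+$ by construction. So the only real content is proving that $V^- \cap V^+ = \emptyset$. This is exactly where non-$1$-hop-dismountability comes in.

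First I would argue the contrapositive: suppose there is a node $u \in V^- \cap V^+$. By definition of $V^-$, there exists a node $v$ with $n^-(v) = u$, i.e. $u$ is the earliest neighbor of $v$. By definition of $V^+$, there exists a node $w$ with $n^+(w) = u$, i.e. $u$ is the latest neighbor of $w$. I should check that $v$, $w$, and $u$ can be taken to be three distinct nodes: $v \neq u$ and $w \neq u$ since a node is never its own neighbor; and if $v = w$ were forced, one would need $n^-(v) = u = n^+(v)$, meaning $u$ is simultaneously the earliest and latest neighbor of $v$ — possible only if $v$ has a single neighbor, which cannot happen in a clique on $\ge 3$ nodes, so for $n \ge 3$ we can pick $v \neq w$ (and the statement is vacuous or trivial for $n \le 2$). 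Then $u, v, w$ witness exactly the hypothesis of Theorem~\ref{thm:dismountability}, so $u$ is $1$-hop dismountable, and hence $\G$ is $1$-hop dismountable — contradicting the assumption.

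Therefore no such $u$ exists, $V^- \cap V^+ = \emptyset$, and together with the definitional disjointness of $V^0$ from both and the fact that $V^- \cup V^+ \cup V^0 = V$, the three sets partition $V$. The only subtlety I anticipate is the distinctness of $v$ and $w$ (the degenerate single-neighbor case), which is handled by the clique assumption; everything else is a direct unwinding of the definitions of $V^-, V^+$ and an appeal to the dismountability theorem already established. I would state the $n \ge 3$ assumption explicitly, or simply note that for $n \le 2$ the partition claim holds trivially since one can take $V^- = V^+ = \emptyset$ or the graph is not considered.
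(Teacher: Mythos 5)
Your proof is correct and takes essentially the same approach as the paper, whose entire argument is the one-line observation that any node in $V^- \cap V^+$ is $1$-hop dismountable (plus the definitional fact that $V^0$ covers the rest). The extra care you take to ensure $v \neq w$ (ruled out for $n \ge 3$ because a proper labeling forces the earliest and latest neighbors of any node of degree at least $2$ to differ) is a degenerate-case detail the paper leaves implicit.
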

\begin{proof}
  By definition, $V^0 = V \setminus (V^- \cup V^+)$. Moreover, if $V^- \cap V^+ \ne \emptyset$, then any node in the intersection is 1-hop dismountable.
\end{proof}

\subsection{Non \{1,2\}-hop dismountable cliques}




\begin{lemma}\label{lem:simple-2-hop}
  Let $u$ and $v$ be two nodes of $V^+$. If $\G$ is not $\{1,2\}$-hop dismountable, then $n^-(u)$ and $n^-(v)$ must be different. Similarly, if $u$ and $v$ are two nodes of $V^-$, then $n^+(u)$ and $n^+(v)$ must be different.
\end{lemma}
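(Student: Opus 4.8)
The plan is to prove the first statement by contradiction and then derive the second one by time reversal. So I would assume that $\G$ is not $\{1,2\}$-hop dismountable and that there are two (distinct) nodes $u,v\in V^+$ with $n^-(u)=n^-(v)=w$, and then show that $u$ is $2$-hop dismountable, contradicting the hypothesis.

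First I would record two elementary facts. Since the edges $uw=e^-(u)$ and $vw=e^-(v)$ are both incident to $w$ and $\G$ is proper, $\lambda(uw)\neq\lambda(vw)$, so up to swapping the names of $u$ and $v$ I may assume $\lambda(uw)<\lambda(vw)$. Second, because $u\in V^+$, there is a node $x$ with $n^+(x)=u$, i.e.\ $xu=e^+(x)$ is the latest edge at $x$. With these in hand, the core step is to exhibit the two paths required to $2$-hop dismount $u$. On the reception side, the single edge $xu$ is a temporal path from $x$ to $u$ that departs from $x$ through $e^+(x)$, hence a valid $P^+$ of length $1$. On the emission side, I would take $P^-=\langle uw,\,wv\rangle$: it is a temporal path since $\lambda(uw)<\lambda(vw)$, it has length $2$, and it arrives at $v$ through the edge $wv=e^-(v)$ (as $n^-(v)=w$), so it is a valid $P^-$ — recall that only the edge incident to $v$ must be locally earliest, which it is. Both paths have length at most $2$, so $u$ is $2$-hop dismountable, a contradiction. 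A quick sanity check is worthwhile here: $u\neq v$ by assumption and $w\notin\{u,v\}$ since $w$ is a neighbor of each, while $P^+$ and $P^-$ are allowed to share vertices (or even an edge) without harm, because dismounting removes only $u$ and keeps the edges of both paths in the spanner.

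For the ``similarly'' clause I would pass to the temporal clique $\bar\G$ obtained from $\G$ by replacing every label $\ell$ with $M-\ell$ for some $M$ larger than all labels. Then $\bar\G$ is again a simple proper temporal clique, it exchanges $e^-$ with $e^+$ (hence $V^-$ with $V^+$ and $n^-$ with $n^+$), and it preserves $\{1,2\}$-hop dismountability because the definition is symmetric in the roles of $P^-$ and $P^+$. Applying the already-proved first part to $\bar\G$ then yields exactly $n^+(u)\neq n^+(v)$ for all $u,v\in V^-$.

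I do not expect a genuine obstacle; the only thing requiring a moment's thought is choosing which node to dismount. The natural temptation is to dismount the common earliest neighbor $w$, but $w\in V^-\setminus V^+$ makes its reception side fail. Dismounting $u$ instead works precisely because $u\in V^+$ hands us the reception path for free, while the shared earliest neighbor $w$, together with the WLOG ordering of $\lambda(uw)$ and $\lambda(vw)$, supplies the $2$-hop emission path.
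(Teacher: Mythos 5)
Your proof is correct and follows essentially the same route as the paper: it dismounts the node of $V^+$ whose edge to the common earliest neighbor $w$ is earlier (the paper names this node $v$, you name it $u$), using the $2$-hop path through $w$ into the other node's earliest edge for emissions and membership in $V^+$ for receptions. Your explicit time-reversal argument for the second clause just makes precise what the paper dismisses as ``symmetric.''
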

\begin{proof}
  We prove the first statement, the other being symmetric.
  Let $u,v \in V^+$ and $n^-(u)=n^-(v)=w$. Without loss of generality, assume that $\lambda(v,w) < \lambda(u,w)$ (see~\cref{fig:simple-2-hop}). Node $v$ can reach $u$ in two hops, arriving through $e^-(u)$, and because it is in $V^+$, it can also be reached by some node $z$ through $e^+(z)$. Node $v$ is thus 2-hop dismountable. Note that $z$ could be in $V^0$, $V^-$, or $V^+$ (it could even be $u$), without affecting the lemma (it cannot be $w$, though, since $\lambda(v,w) < \lambda(u,w)$).
\end{proof}

\begin{figure}[h]
    \begin{center}
\begin{tikzpicture}[xscale=.8]
  \tikzstyle{every node}=[inner sep=2pt,circle,black,fill=darkgray]
  \path (1,1) node (w){};
  \path (0,0) node (u){};
  \path (2,0) node (v){};
  \path (3,1) node (z){};

  \tikzstyle{every node}=[]
  \draw (v) node[draw,circle,inner sep=2.7,thick,brown](vv){};
  \path (w) node[above=2pt]{$w\in V^-$};
  \path (u) node[below=2pt]{$u\in V^+$};
  \path (v) node[below=2pt]{$v\in V^+$};
  \path (z) node[above=2pt]{$z$};
  \tikzstyle{every node}=[font=\scriptsize]
  \draw (u)-- node[left=-1.5pt,pos=.25]{$-$}(w);
  \draw (v)-- node[right=-2pt,pos=.25]{$-$}(w);
  \draw (z)-- node[left=-1pt,pos=.1]{$+$}(v);
  \draw[rounded corners,->,gray,thick] ([xshift=-5pt]v.west) -- ([yshift=-5pt]w.south) -- ([xshift=5pt]u.east);
\end{tikzpicture}
\caption{\label{fig:simple-2-hop}Node $v$ is 2-hop dismountable.}
\end{center}
\end{figure}

\begin{corollary}\label{cor:matching}~
  \begin{enumerate}
  \item The edges $\{e^-(v): v\in V^+\}$ form a perfect matching between $V^-$ and $V^+$.
  \item The edges $\{e^+(v): v\in V^-\}$ form a perfect matching between $V^-$ and $V^+$.
  \item $V^-$ and $V^+$ have the same size.
  \end{enumerate}
\end{corollary}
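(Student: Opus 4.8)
The plan is to derive all three parts at once from \cref{lem:simple-2-hop}, using additionally the earlier lemma that $\{V^-,V^+,V^0\}$ partitions $V$ when $\G$ is not $1$-hop dismountable (hence in particular $V^-\cap V^+=\emptyset$). First I would look at the family $\{e^-(v):v\in V^+\}$. For $v\in V^+$, the edge $e^-(v)$ joins $v$ to $n^-(v)$, and by definition $n^-(v)\in V^-$; since $V^+$ and $V^-$ are disjoint, this edge has exactly one endpoint in each of $V^-$ and $V^+$. By \cref{lem:simple-2-hop}, the map $v\mapsto n^-(v)$ is injective on $V^+$, so these $|V^+|$ edges are pairwise non-adjacent: two of them have distinct $V^+$-endpoints by construction, distinct $V^-$-endpoints by the lemma, and no $V^+$-endpoint of one can be the $V^-$-endpoint of another because $V^+\cap V^-=\emptyset$. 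Hence $\{e^-(v):v\in V^+\}$ is a matching that saturates $V^+$ and lies entirely between $V^-$ and $V^+$; in particular $|V^+|\le|V^-|$.

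Next I would run the symmetric argument on $\{e^+(u):u\in V^-\}$, using the second half of \cref{lem:simple-2-hop} and the fact that $e^+(u)$ joins $u\in V^-$ to $n^+(u)\in V^+$: this is a matching between $V^-$ and $V^+$ saturating $V^-$, so $|V^-|\le|V^+|$. Combining the two inequalities gives $|V^-|=|V^+|$, which is part~3. Once the two sides have equal cardinality, a matching between them that saturates one side necessarily saturates the other — its edges use $|V^+|=|V^-|$ distinct vertices on each side — so it is a perfect matching; applied to the two matchings above this yields parts~1 and~2.

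I don't expect a genuine obstacle: all the real work is already done in \cref{lem:simple-2-hop}, and what remains is the bookkeeping that an injective partial map into a disjoint set, together with its symmetric counterpart, forces equal cardinalities and perfectness. The only step that needs a moment's care is verifying that the listed edges truly form a matching (no coincidence among any of the endpoints), and this is exactly the place where the disjointness $V^-\cap V^+=\emptyset$ — i.e., non $1$-hop dismountability — is invoked on top of the $2$-hop argument.
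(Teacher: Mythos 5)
Your proof is correct and follows exactly the intended derivation: the paper states this as an unproved corollary of Lemma~\ref{lem:simple-2-hop}, and your argument (injectivity of $v\mapsto n^-(v)$ on $V^+$ and of $u\mapsto n^+(u)$ on $V^-$, disjointness of the parts, then the two cardinality inequalities forcing perfectness) is precisely the bookkeeping the authors leave implicit. No issues.
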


\noindent We now establish that $V^0$ must be empty.

\begin{lemma}\label{lemma5}
  If $\G$ is not $\{1,2\}$-hop dismountable, then $V^0 = \emptyset$.
\end{lemma}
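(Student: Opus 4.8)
The plan is to argue by contradiction. Suppose $V^0 \neq \emptyset$ and fix some $x \in V^0$, so that $x \notin V^-\cup V^+$; I will then produce a $\{1,2\}$-hop dismountable node, contradicting the hypothesis. Write $a := n^-(x) \in V^-$ and $c := n^+(x) \in V^+$. Using the two perfect matchings of Corollary~\ref{cor:matching}, let $p$ be the unique node of $V^+$ with $n^-(p)=a$, and $q$ the unique node of $V^-$ with $n^+(q)=c$; note that $\lambda(ap)=\lambda(e^-(p))$ and $\lambda(qc)=\lambda(e^+(q))$.

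First I would consider the two ``natural'' two-hop paths one would use to dismount $x$, namely $P^- : x \to a \to p$ and $P^+ : q \to c \to x$. If $P^-$ is temporal it arrives at $p$ through $e^-(p)$, so it delegates the emissions of $x$ to $p$; if $P^+$ is temporal it leaves $q$ through $e^+(q)$, so it delegates the receptions of $x$ to $q$. Each is temporal precisely when a single inequality holds: $P^-$ iff $\lambda(e^-(x))\le\lambda(e^-(p))$ (call this $(C1)$), and $P^+$ iff $\lambda(e^+(q))\le\lambda(e^+(x))$ (call this $(C2)$). Since $V^-,V^+,V^0$ are pairwise disjoint and $n^-(v)\neq v$, $n^+(v)\neq v$ for every $v$, the three vertices along each of $P^-,P^+$ are distinct, so both are genuine (simple) temporal paths when the relevant inequality holds. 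Hence, if both $(C1)$ and $(C2)$ hold, $x$ itself is $2$-hop dismountable and we are done.

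The crux is the remaining case, where $(C1)$ or $(C2)$ fails; the key observation I would use is that such a failure is not an obstruction but certifies dismountability of the \emph{opposite} endpoint of the failing path. If $(C1)$ fails, i.e.\ $\lambda(e^-(p))<\lambda(e^-(x))$, then the reversed path $p \to a \to x$ is temporal and arrives at $x$ through $e^-(x)$, which delegates the emissions of $p$ to $x$; and since $p \in V^+$ there is a node $s$ with $n^+(s)=p$, so the one-hop path $s \to p$ delegates the receptions of $p$ to $s$. Thus $p$ is $2$-hop dismountable. Symmetrically, if $(C2)$ fails then $x \to c \to q$ is temporal, leaves $x$ through $e^+(x)$, and delegates the receptions of $q$ to $x$, while $q \in V^-$ supplies a node $t$ with $n^-(t)=q$ and hence a one-hop emission path $q \to t$; so $q$ is $2$-hop dismountable. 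In every case we reach a contradiction, whence $V^0 = \emptyset$.

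The step I expect to be the main obstacle is exactly this last twist — recognizing that when $P^-$ (resp.\ $P^+$) fails to be chronological, its time-reversal is precisely a valid emission (resp.\ reception) path for an endpoint that already lies in $V^+$ (resp.\ $V^-$) and therefore has the complementary direction available for free. The remainder is routine bookkeeping: checking that each sequence written down is chronological, that its last contact is the required locally-earliest or locally-latest edge, and that the vertices involved are distinct — all of which follow from Corollary~\ref{cor:matching} and from the disjointness of $V^-,V^+,V^0$.
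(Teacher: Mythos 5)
Your proof is correct and follows essentially the same route as the paper's: you introduce the same four auxiliary nodes (your $a,c,p,q$ are the paper's $v_1,v_2,u_1,u_2$) and the same trichotomy on the two label comparisons, concluding that either $x$ itself, or $p$, or $q$ is $2$-hop dismountable. The "time-reversal" observation you highlight is exactly the paper's cases $a<b$ and $c<d$, so no substantive difference.
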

\begin{proof} We will prove the contrapositive.
Let $w\in V^0$, let $v_1=n^-(w) \in V^-$, and let $v_2=n^+(w) \in V^+$.
By~\cref{cor:matching}, there is another node $u_1\in V^+$ such that $n^-(u_1)=v_1$, and another node $u_2 \in V^-$ such that $n^+(u_2)=v_2$. Since $u_1\in V^+$, there is also a node $x$ such that $n^+(x)=u_1$ and a node $y$ such that $n^-(y)=u_2$. Let $a=\lambda(u_1,v_1)$, $b=\lambda(v_1,w)$, $c=\lambda(w,v_2)$, and $d=\lambda(v_2,u_2)$ (see~\cref{fig:complex-2-hop}).
\begin{figure}[h]
    \begin{center}
\begin{tikzpicture}[xscale=.8]
  \tikzstyle{every node}=[inner sep=2pt,circle,black,fill=darkgray]
  \path (1,1) node (v1){};
  \path (0,0) node (u1){};
  \path (4,0) node (u2){};
  \path (2,0) node (w){};
  \path (3,1) node (v2){};
  \path (-1,1) node (x){};
  \path (5,1) node (y){};

  \tikzstyle{every node}=[]
  \path (v1) node[above=2pt]{$v_1 \in V^-$};
  \path (u1) node[below=2pt]{$u_1\in V^+$};
  \path (u2) node[below=2pt]{$u_2\in V^-$};
  \path (w) node[below=2pt]{$w\in V^0$};
  \path (x) node[above=2pt]{$x$};
  \path (y) node[above=2pt]{$y$};
  \path (v2) node[above=2pt]{$v_2\in V^+$};
  \tikzstyle{every node}=[font=\scriptsize]
  \draw (u1)-- node[right=-1.5pt,pos=.1]{$-$} node[above left=-2pt]{$a$}(v1);
  \draw (w)-- node[left=-2pt,pos=.1]{$-$} node[above right=-2pt]{$b$}(v1);
  \draw (w)-- node[right=-1pt,pos=.1]{$+$} node[above left=-2pt]{$c$}(v2);
  \draw (u2)-- node[left=-2pt,pos=.1]{$+$} node[above right=-2pt]{$d$}(v2);
  \draw (x)-- node[below left=-3pt,pos=.1]{$+$}(u1);
  \draw (y)-- node[below right=-3pt,pos=.1]{$-$}(u2);
\end{tikzpicture}
\caption{\label{fig:complex-2-hop}Illustration of the proof of Lemma~\ref{lemma5}.}
\end{center}
\end{figure}
Let us examine the relation between $a,b,c,$ and $d$.
    If $a<b$ then $u_1$ is 2-hop dismountable with respect to $x$ and $w$. If $c<d$ then $u_2$ is 2-hop dismountable with respect to $y$ and $w$. We thus have $a > b$ and $c>d$. But in this case $w$ is 2-hop dismountable with respect to $u_1$ and $u_2$. (As before, some of these nodes may coincide, e.g. $u_1$ and $v_2$, without affecting the lemma.)
\end{proof}


Thus, $V^-$ and $V^+$ form a partition of $V$.

\begin{lemma}[Reciprocity]\label{cor:reciprocity}
    Let $v\in V^+$ and $u\in V^-$ such that $u=n^-(v)$. If $\G$ is not $\{1,2\}$-hop dismountable, then $v$ is also the earliest neighbor of $u$ among the nodes of $V^+$.
    Similarly, if $v=n^+(u)$, then $u$ is the latest neighbor of $v$ among the nodes of $V^-$.
\end{lemma}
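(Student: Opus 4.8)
The plan is to prove each of the two statements by contradiction: negate the desired conclusion and exhibit a $2$-hop dismountable node, which contradicts the hypothesis. I describe the argument for the first statement; the second is its time-reversal.

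First I would unpack the hypothesis $u = n^-(v)$: it says exactly that $e^-(v) = uv$, i.e. $\lambda(uv) \le \lambda(vx)$ for every neighbour $x$ of $v$. Now suppose, for contradiction, that $v$ is \emph{not} the earliest neighbour of $u$ among $V^+$; then there is some $v' \in V^+$ with $\lambda(uv') < \lambda(uv)$. Note $v' \ne v$ (the labels differ) and $v' \ne u$ since $V^- \cap V^+ = \emptyset$ ($\G$ is not $1$-hop dismountable), so $uv'$ is a genuine edge and $v', u, v$ are pairwise distinct.

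The heart of the proof is then to check that $v'$ is $2$-hop dismountable. On the emission side, the contact on $uv'$ followed by the contact on $uv$ is a temporal path $P^-$ from $v'$ to $v$, because $\lambda(uv') < \lambda(uv)$; crucially it \emph{arrives at $v$ through $e^-(v) = uv$}, so (exactly as in the definition of $k$-hop dismountability) $v'$ can delegate its emissions to $v$. On the reception side, $v' \in V^+$ means by definition that some node $z$ has $n^+(z) = v'$, i.e. $e^+(z) = zv'$; the single edge $z \to v'$ is a length-$1$ temporal path $P^+$ departing from $z$ through $e^+(z)$, so $v'$ can be reached last. Both $P^-$ and $P^+$ have length $\le 2$, hence $v'$ is $2$-hop dismountable, contradicting that $\G$ is not $\{1,2\}$-hop dismountable. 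As in the previous lemmas, coincidences among $u,v,v',z$ are harmless (in fact $z \notin \{u,v'\}$).

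I do not expect a genuine obstacle; the two points that need care are (i) that $P^-$ arrives at $v$ through its \emph{earliest} incident edge and not merely through some locally-earliest edge -- which is precisely what the hypothesis $u = n^-(v)$ guarantees -- and (ii) that the membership $v' \in V^+$ (rather than $V^0$) is what provides the length-$1$ path $P^+$, thereby keeping the total hop count within $2$. For the second statement, negate ``$u$ is the latest neighbour of $v$ among $V^-$'': one obtains $u' \in V^-$ with $\lambda(vu') > \lambda(vu) = \lambda(e^+(u))$, so $u \to v \to u'$ is a length-$2$ path $P^+$ departing from $u$ through $e^+(u)$, while $u' \in V^-$ yields a node $y$ with $e^-(y) = yu'$ and hence a length-$1$ path $P^-$; thus $u'$ is $2$-hop dismountable, again a contradiction.
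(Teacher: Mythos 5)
Your proof is correct and follows essentially the same route as the paper's: assume a $v'\in V^+$ with $\lambda(uv')<\lambda(uv)$, observe that $v'\to u\to v$ is a $2$-hop temporal path arriving through $e^-(v)$, and use $v'\in V^+$ to supply the reception-side path, making $v'$ $2$-hop dismountable. The extra care you take (distinctness of $u,v,v'$, and that the path arrives through the globally earliest edge of $v$) matches the paper's implicit reasoning; no discrepancy.
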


\begin{proof}
  We will prove the contrapositive. Let $w \in V^+$ such that $\lambda(uw)<\lambda(uv)$. Then, $w$ can reach $v$ in two hops, arriving at $v$ through $e^-(v)$. Being in $V^+$, it can also be reached by a node $x\in V^-$ through $e^+(x)$. Thus, $w$ is 2-hop dismountable with respect to $v$ and $x$. The second part of the statement is symmetrical.
\end{proof}

We insist that reciprocity does not imply that the ``earliest neighbor'' relation (or the latest neighbor relation) is symmetric, it only implies that it is the case if we restrict the graph to the edges between $V^-$ and $V^+$. In fact, node $n^-(v)$ must have earlier neighbors than $v$ within its own part $V^-$ (and likewise for later neighbors of $n^+(v)$ within $V^+$), as otherwise the graph would be $1$-hop dismountable. To avoid ambiguities, we depict the $+$ or $-$ signs with different sizes (large size for the actual earliest/latest neighbor, and small size when restricted to the other part), as shown in Figure~\ref{fig:reciprocity}.

\begin{figure}[h]
    \centering
    \begin{tikzpicture}[scale=2]
      \tikzstyle{every node}=[inner sep=2pt,circle,black,fill=darkgray]
      \path (0,0) node (u){};
      \path (1,-.6) node (v){};
      \path (0,-.6) node (y){};
      \path (1,0) node (x){};

      \tikzstyle{every node}=[]
      \path (u) node[left=2pt]{$u$};
      \path (v) node[right=2pt]{$v$};
      \path (y) node[left=2pt]{$y$};
      \path (x) node[right=2pt]{$x$};
      \path (-.6,-.3) node{$V^-$};
      \path (1.6,-.3) node{$V^+$};
      \draw (-.1,-.3) ellipse (.25cm and .6cm);
      \draw (1.1,-.3) ellipse (.25cm and .6cm);

      \draw (x)-- (y);
      \draw (u)-- node[above=-2pt,pos=.16]{\Large $+$} node[above=-2pt,pos=.84]{\scriptsize $+$} (x);
      \draw (y)-- node[below=-2pt,pos=.18]{\scriptsize $-$} node[below=-2pt,pos=.84]{\Large $-$} (v);
    \end{tikzpicture}

    \caption{Reciprocity in a non $\{1,2\}$-hop dismountable graph.}
    \label{fig:reciprocity}
\end{figure}

At this point, the reader interested only in the quest for finding linear spanners (whatever the constant) and/or in a proof of existence of $O(n \log n)$ spanners can go directly to Section~\ref{sec:spanners}.

\subsection{Non \{1,2,3\}-hop dismountable cliques}

\begin{lemma}\label{lemma10}
    Let $\G=(V,E,\lambda)$ be a non $\{1,2\}$-hop dismountable clique. Let $x,y\in V^+$ (respectively, $x,y\in V^-$), let $e=\{x,u\}=e^-(x)$ and $f=\{y,v\}=e^-(y)$ (respectively, $e^+(x)$ and $e^+(y)$). If $\G$ is not 3-hop dismountable then $\lambda(uv)$ cannot be between $\lambda(ux)$ and $\lambda(vy)$.
\end{lemma}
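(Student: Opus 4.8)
The plan is to prove the contrapositive: assuming that $\lambda(uv)$ lies strictly between $\lambda(ux)$ and $\lambda(vy)$, we exhibit a $3$-hop dismountable node, contradicting the hypothesis. We carry this out for the case $x,y\in V^+$ (so $e=e^-(x)$ and $f=e^-(y)$); the case $x,y\in V^-$ follows by reversing time, which exchanges $e^-$ with $e^+$, $V^-$ with $V^+$, and the ``emission'' and ``reception'' delegation roles, while leaving the betweenness condition unchanged (negation is an order-reversing bijection on labels, and betweenness is preserved by it).

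First I would set up notation and check non-degeneracy. Write $a=\lambda(ux)$, $b=\lambda(vy)$, $c=\lambda(uv)$. Since $e^-(x)=\{x,u\}$ and $e^-(y)=\{y,v\}$, we have $u=n^-(x)\in V^-$ and $v=n^-(y)\in V^-$, while $x,y\in V^+$; as \G is not $1$-hop dismountable, $V^-\cap V^+=\emptyset$, so $\{x,y\}\cap\{u,v\}=\emptyset$. Also $x\neq y$ (otherwise $u=v$ and $\lambda(uv)$ is meaningless), whence by \cref{lem:simple-2-hop} $u=n^-(x)\neq n^-(y)=v$. Thus $x,u,v,y$ are four distinct nodes and, \G being a clique, $x\,u\,v\,y$ is a path in the footprint with edges $\{x,u\}$, $\{u,v\}$, $\{v,y\}$ of respective times $a,c,b$. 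The pairs $(x,u)$ and $(y,v)$ play interchangeable roles, and swapping them exchanges $a$ with $b$ while fixing $c$, so we may assume without loss of generality that $a<c<b$.

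Then I would show that $x$ is $3$-hop dismountable. For the reception side: since $x\in V^+$ there is a node $p$ with $n^+(p)=x$, i.e.\ $e^+(p)=\{p,x\}$; the single edge $\{p,x\}$ is a temporal path ending at $x$ that departs from $p$ through $e^+(p)$, so $x$ can delegate its receptions to $p$ via a $1$-hop path $P^+$. For the emission side: the sequence of contacts along $x\to u\to v\to y$ has times $a,c,b$, which are increasing since $a<c<b$, hence it is a valid temporal path from $x$ to $y$; its last edge is $\{v,y\}=f=e^-(y)$, so this path arrives at $y$ through $e^-(y)$, and $x$ can delegate its emissions to $y$ via this $3$-hop path $P^-$. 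Both $P^-$ and $P^+$ have length at most $3$, so $x$ is $3$-hop dismountable, contradicting the assumption that \G is not $3$-hop dismountable.

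The crux of the argument is choosing the right node to dismount and the right delegation paths: any node of $V^+$ automatically has a ``free'' $1$-hop reception delegate, so the only real work is on the emission side, and the betweenness hypothesis $a<c<b$ is precisely what makes the three contacts $ux$, $uv$, $vy$ chronological, turning $x\,u\,v\,y$ into a temporal path that ends on the locally earliest edge $e^-(y)$. I do not expect a genuine obstacle beyond keeping these conditions (and the distinctness of $x,u,v,y$, which is exactly where \cref{lem:simple-2-hop} and non-$1$-hop-dismountability are used) straight; the $x,y\in V^-$ case is the verbatim time-reversed mirror.
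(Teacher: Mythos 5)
Your proof is correct and takes essentially the same route as the paper: assume the labels along $x\,u\,v\,y$ are increasing, delegate $x$'s emissions to $y$ via that $3$-hop temporal path ending on $e^-(y)$, and delegate its receptions to the node whose latest neighbor is $x$. Your added checks (distinctness of $x,u,v,y$ via non-$1$-hop-dismountability and Lemma~\ref{lem:simple-2-hop}, and the WLOG on the direction of the betweenness) are details the paper leaves implicit.
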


\begin{figure}[h]
    \centering
    \begin{tikzpicture}[scale=2]
      \tikzstyle{every node}=[inner sep=2pt,circle,black,fill=darkgray]
      \path (0,0) node (x){};
      \path (1,-.6) node (v){};
      \path (0,-.6) node (y){};
      \path (1,0) node (u){};
      \path (1,.6) node (w){};
      \path (x)+(.2,-.13) coordinate (xx){};
      \path (u)+(-.2,-.13) coordinate (uu){};
      \path (v)+(-.2,.13) coordinate (vv){};
      \path (y)+(.2,.13) coordinate (yy){};

      \tikzstyle{every node}=[]
      \path (u) node[right=2pt]{$u$};
      \path (v) node[right=2pt]{$v$};
      \path (y) node[left=2pt]{$y$};
      \path (x) node[left=2pt]{$x$};
      \path (w) node[right=2pt]{$w$};
      \path (-.6,0) node{$V^+$};
      \path (1.6,0) node{$V^-$};

      \draw[bend left,gray!80] (u) edge (v);
      \draw[thick,red] (x)-- node[above=-2pt,pos=.84]{\Large $+$} (w);
      \draw[thick,green] (u)-- node[below=-3pt,pos=.16]{\scriptsize $-$} node[below=-5pt,pos=.84]{\Large $-$} (x);
      \draw[thick,green] (y)-- node[below=-4pt,pos=.18]{\Large $-$} node[below=-3pt,pos=.84]{\scriptsize $-$} (v);
      \draw[thick,gray!80,->,rounded corners=8pt] (xx) -- (uu) -- (vv) -- (yy);
    \end{tikzpicture}

    \caption{3-hop dismountability of a non $\{1,2\}$-hop dismountable clique.}
    \label{fig:lemma8}
\end{figure}

\begin{proof}
  We will prove the case $x,y\in V^+$ (see Figure~\ref{fig:lemma8}), the other case is symmetric. Assume by contraposition that $\lambda(ux)<\lambda(uv)<\lambda(vy)$ and let $w\in V^-$ be the unique node such that $n^+(w)=x$. Then $x$ can delegate its emissions to $y$ through the temporal path $xuvy$ and it can delegate its receptions to $w$ through the edge $xw$. Thus, $\G$ is 3-hop dismountable.
\end{proof}

\subsection{Non $k$-hop dismountable cliques}

\begin{theorem}\label{thm:threehopdismountable}
    Let $\G=((V,E),\lambda)$ be a temporal clique, if $\G$ is $k$-hop dismountable for $k>3$, then $\G$ is $\{1,2,3\}$-hop dismountable.
\end{theorem}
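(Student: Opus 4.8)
The plan is to prove the contrapositive in a strong form: if $\G$ is not $\{1,2,3\}$-hop dismountable, then $\G$ is not $k$-hop dismountable for \emph{any} $k$ at all. Throughout we use the structure already forced on such cliques: $V=V^-\sqcup V^+$ (by Lemma~\ref{lemma5} and non-$1$-hop dismountability), the two perfect matchings of Corollary~\ref{cor:matching}, reciprocity (Lemma~\ref{cor:reciprocity}), and the betweenness obstruction of Lemma~\ref{lemma10}.

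Say a node $u$ is \emph{emission-$k$-dismountable} if there is a temporal path of length at most $k$ from $u$ to some $v\neq u$ arriving through $e^-(v)$, and \emph{reception-$k$-dismountable} if there is a temporal path of length at most $k$ from some $w\neq u$ to $u$ departing through $e^+(w)$; then $u$ is $k$-hop dismountable exactly when it is both. The key preliminary remark is that every $u\in V^+$ is reception-$1$-dismountable (if $u=n^+(w)$, the single edge $wu=e^+(w)$ does the job), and symmetrically every $u\in V^-$ is emission-$1$-dismountable. Hence a node of $V^+$ is $k$-hop dismountable iff it is emission-$k$-dismountable, and a node of $V^-$ is $k$-hop dismountable iff it is reception-$k$-dismountable. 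By the time-reversal symmetry swapping $V^-\leftrightarrow V^+$ and emission $\leftrightarrow$ reception — which preserves both the hypothesis and all the structural lemmas, these being stated symmetrically — it therefore suffices to prove that \emph{no node of $V^+$ admits an emission path at all}. Moreover an emission path of length at most $3$ out of a $V^+$-node would, combined with reception-$1$-dismountability, make that node $\{1,2,3\}$-hop dismountable; so it is enough to reach a contradiction from a $V^+$-node carrying an emission path of length $\ge 4$.

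Assume such a path exists and choose $v_0\in V^+$ with an emission path $P=(v_0=p_0\to p_1\to\cdots\to p_\ell=v')$ of minimum length $\ell$ over all pairs (a $V^+$-node, an emission path out of it); by the previous remark $\ell\ge 4$. First, $p_1,\dots,p_{\ell-1}\in V^-$: if some internal $p_i$ were in $V^+$, the suffix $p_i\to\cdots\to p_\ell$ would be an emission path out of $p_i$ of length $\ell-i<\ell$, contradicting minimality (or, if that length is at most $3$, exhibiting a $\{1,2,3\}$-hop dismountable node). Now let $x_1,x_2\in V^+$ be the matching partners of $p_1,p_2$, i.e.\ $n^-(x_1)=p_1$ and $n^-(x_2)=p_2$, which are well defined and distinct by Corollary~\ref{cor:matching}; set $c_i=\lambda(x_ip_i)=\lambda(e^-(x_i))$ and $t_i=\lambda(p_{i-1}p_i)$. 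Reciprocity applied to $x_1,p_1$ gives $c_1\le\lambda(v_0p_1)=t_1$, and since $t_1<t_2$ (locally injective labels along a temporal path) we get $c_1<t_2$. Applying Lemma~\ref{lemma10} to the pair $x_1,x_2$, the value $\lambda(p_1p_2)=t_2$ cannot lie strictly between $c_1$ and $c_2$; since $c_1<t_2$ and $c_2\neq t_2$, this forces $c_2<t_2\le t_3$. Consequently $x_2\to p_2\to p_3\to\cdots\to p_\ell$ is a temporal path, with labels $c_2<t_3\le t_4\le\cdots\le t_\ell$; its last edge is still $e^-(v')$, so it is an emission path, and its vertices are distinct because $x_2\in V^+$ while $p_2,\dots,p_{\ell-1}\in V^-$ and $x_2\neq v'$ (equality would give $p_2=n^-(v')=p_{\ell-1}$, hence $\ell=3$). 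This emission path out of the $V^+$-node $x_2$ has length $\ell-1<\ell$, contradicting the minimality of $\ell$ if $\ell-1\ge 4$, and contradicting non-$\{1,2,3\}$-hop dismountability if $\ell-1=3$. Hence no $V^+$-node has an emission path; by symmetry no $V^-$-node has a reception path; so no node is $k$-hop dismountable for any $k$, and in particular $\G$ is not $k$-hop dismountable for $k>3$.

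I expect the crux to be the middle step, namely showing that the matching partner $x_2$ of the second vertex can be used to shortcut the first two hops $v_0\to p_1\to p_2$ of $P$ into the single hop $x_2\to p_2$. This is precisely where Lemma~\ref{lemma10} and reciprocity enter, and it is the only point where one must check carefully that the resulting sequence is a genuine vertex-disjoint temporal path whose last edge is locally earliest. The remaining ingredients — the emission/reception reformulation, the time-reversal reduction to a single side, and the fact that the internal vertices of a shortest emission path lie in $V^-$ — are routine consequences of the lemmas already proved.
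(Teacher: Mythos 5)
Your proof is correct. It shares the paper's skeleton~---~assume non $\{1,2,3\}$-hop dismountability, take a minimal witness (your minimal emission-path length out of a $V^+$ node coincides, via the reception-$1$ observation, with the paper's minimal $k$), note that all internal vertices of the emission path lie in $V^-$, then use the $M^-$-partner of an internal vertex to shorten the path~---~but the key step is executed differently. The paper works at the \emph{arrival} end: for the partner $v'$ of $x_{k-2}$ it compares $\lambda(v'x_{k-2})$ with $\lambda(x_{k-2}x_{k-1})$ directly and, without invoking reciprocity or Lemma~\ref{lemma10}, either exhibits a $3$-hop dismountable node outright via $v'\to x_{k-2}\to x_{k-1}\to u$, or reroutes the original path to arrive at $v'$ through $e^-(v')$ in $k-1$ hops. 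You instead work at the \emph{departure} end, importing Lemma~\ref{cor:reciprocity} to get $c_1\le t_1<t_2$ and Lemma~\ref{lemma10} to rule out $c_1<t_2<c_2$, hence $c_2<t_2$, so that the partner $x_2$ of $p_2$ absorbs the first two hops. Your route is slightly heavier in prerequisites (two structural lemmas where the paper's case split is self-contained), but it is more uniform~---~every step of the descent shortens the path by exactly one~---~and the points you single out for verification (that the shortcut is a genuine vertex-disjoint temporal path and that $x_2\neq v'$) are precisely the ones that need, and receive, explicit checking.
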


\begin{proof}
  By contradiction, let $k> 3$ be the smallest $k$ for which $\G$ is $k$-hop dismountable.
  Since $\G$ is not $\{1,2\}$-hop dismountable, then $V^+$ and $V^-$ form a partition of $V$. Let $v$ be a node that is $k$-hop dismountable
  and assume that $v\in V^+$, the other case is symmetrical. Since $v$ is in $V^+$, it can delegate its receptions to some node $w\in V^-$ such that $n^+(w)=v$.
  Now, let $u$ be the node that $v$ can reach through $e^-(u)$ using a temporal path $P$ of length $k$ along nodes $v,x_1,\dots x_{k-1},u$.
  For all $i=1,\dots ,k-1$, if $x_i$ belongs to $V^+$, then $x_i$ is $k'$-hop dismountable, with $k'<k$, thus all the $x_i$s must belong to $V^-$.
  Node $u$, on the other hand, can be either in $V^+$ or $V^-$, the argument will be the same in both cases (see~\cref{fig:k-hops}).
  Consider the node $x_{k-2}$. This node is in $V^-$, so there is a node $v'\in V^+$ such that $n^-(v')=x_{k-2}$.
  Either $\lambda(v'x_{k-2}) < \lambda(x_{k-2}x_{k-1})$ or $\lambda(v'x_{k-2}) > \lambda(x_{k-2}x_{k-1})$.
  If $\lambda(v'x_{k-2}) < \lambda(x_{k-2}x_{k-1})$, then $v'$ can reach $u$ through $e^-(u)$ in~$3$-hop,
  and since $v'\in V^+$, $v'$ can delegate its receptions to some node $w'\in V^-$ through $e^+(w')=v'w'$, thus $v'$ is $3$-hop dismountable.
  Now, if $\lambda(v'x_{k-2}) > \lambda(x_{k-2}x_{k-1})$, then we also have $\lambda(v'x_{k-2}) > \lambda(x_{k-3}x_{k-2})$
  (since $P$ is a temporal path), thus $v$ can reach $v'$ through $e^-(v')$ instead of reaching $u$ through $e^-(u)$,
  in $k-1$ hops, which contradicts the fact that $k$ is minimum.
\end{proof}

\begin{figure}[h]
    \centering
    \begin{tikzpicture}[xscale=1.4,yscale=1]
      \tikzstyle{every node}=[inner sep=1.6pt,circle,black,fill=darkgray]
      \path (0,0) node (u1){};
      \path (0,2) node (u3){};
      \path (0,3) node (u4){};
      \path (1,0) node (v1){};
      \path (1,2) node (v3){};
      \path (1,3) node (v4){};

      \tikzstyle{every node}=[]
      \path (u1) node[left=2pt]{$v$};
      \path (u3) node[left=2pt]{$v'$};
      \path (u4) node[left=2pt]{$u$};
      \path (v3) node[right=2pt]{$x_{k-2}$};
      \path (v4) node[right=2pt]{$x_{k-1}$};

      \draw (u1) -- (v1);
      \draw (u3) -- node[above=-4pt,pos=.15]{$-$} (v3);
      \draw (u4) -- node[above=-4pt,pos=.15]{$-$} (v4);
      \draw (v3) -- (v4);
      \draw[snake it] (v1) -- (v3);
      \path (-.8,1.5) node {$V^+$};
      \path (1.8,1.5) node {$V^-$};
\end{tikzpicture}
\hspace{1.5cm}
\begin{tikzpicture}[xscale=1.4,yscale=1]
      \tikzstyle{every node}=[inner sep=1.6pt,circle,black,fill=darkgray]
      \path (0,0) node (u1){};
      \path (0,1) node (u2){};
      \path (0,2) node (u3){};
      \path (0,3) node (u4){};
      \path (1,0) node (v1){};
      \path (1,1) node (v2){};
      \path (1,2) node (v3){};
      \path (1,3) node (v4){};

      \tikzstyle{every node}=[]
      \path (u1) node[left=2pt]{$v$};
      \path (u2) node[left=2pt]{$v'$};
      \path (v4) node[right=2pt]{$u$};
      \path (v2) node[right=2pt]{$x_{k-2}$};
      \path (v3) node[right=2pt]{$x_{k-1}$};

      \draw (u1) -- (v1);
      \draw (u3) -- (v3);
      \draw (u4) -- (v4);
      \draw (v2) -- (v3);
      \draw (u2) -- node[above=-4pt,pos=.15]{$-$} (v2);
      \draw (v3) -- node[right=-2pt,pos=.85]{$-$} (v4);
      \draw[snake it] (v1) -- (v2);
      \path (-.8,1.5) node {$V^+$};
      \path (1.8,1.5) node {$V^-$};
\end{tikzpicture}

    \caption{Illustration of the proof of Theorem~\ref{thm:threehopdismountable}.}
    \label{fig:k-hops}
\end{figure}

So, we have the interesting property that $k$-hop dismountability implies $\{1,2,3\}$-hop dismountability. Does this imply that if a clique is \emph{recursively} $k$-hop dismountable, then it admits a spanner of size at most $6n$? In fact, the implication does not work directly, because chosing a $\{1,2,3\}$-hop dismountable node at some step of the recursion, instead of another node with a larger $k$, may prevent \emph{recursive} dismountability: the order in which the nodes are dismounted seem to matter here. However, a stronger result actually holds for another reason, we prove in Section~\ref{sec:misc} that recursively $k$-hop dismountable cliques are pivotable, and thus admit a $2n-3$ spanner.

Back to non-recursive dismountability, a direct consequence of Theorem~\ref{thm:threehopdismountable} is that any \emph{minimal counter-example} to the existence of $6n$ spanners in temporal cliques must be non $k$-hop dismountable, as if it were, then it would be $\{1,2,3\}$-hop dismountable and could thus be reduced to a smaller instance by paying at most $6$ edges in the spanner. An interesting by-product of the proof of Theorem~\ref{thm:threehopdismountable} is that this is actually true also for $4n$.

\begin{corollary}\label{cor:4n-spanner}
  Any minimal counter-example to the existence of $4n$ spanners in temporal cliques must be non $k$-hop dismountable.
\end{corollary}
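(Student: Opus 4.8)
The plan is to strengthen the reasoning behind the $6n$ bound. For $6n$ one only uses that a $k$-hop dismountable clique reduces at the cost of at most $2k$ edges, and that by Theorem~\ref{thm:threehopdismountable} one may always take $k\le 3$, giving $6$. Instead, I would show that \emph{every} $k$-hop dismountable temporal clique, for \emph{any} $k\ge 1$, reduces to a (sub)clique on one fewer vertex by adding at most $4$ edges to the spanner. Granting this, the corollary follows by the usual minimality argument: if a minimal counter-example $\G$ on $n$ vertices to the existence of $4n$ spanners were $k$-hop dismountable, then $\G\setminus\{u\}$ (on $n-1$ vertices) would not be a counter-example, hence would admit a spanner $S'$ of size at most $4(n-1)$; by the dismountability principle recalled in Section~\ref{sec:dismountability}, $S'$ together with the at most $4$ extra edges would be a spanner of $\G$ of size at most $4n$, contradicting that $\G$ is a counter-example.

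To prove the $4$-edge bound I would distinguish two cases. If $\G$ is $\{1,2\}$-hop dismountable, a single dismounting step costs at most $2\cdot 2=4$ edges, and we are done. Otherwise $\G$ is $k$-hop dismountable only for some $k\ge 3$; for $k=3$ this is immediate, and for $k>3$ Theorem~\ref{thm:threehopdismountable} yields that $\G$ is $\{1,2,3\}$-hop dismountable. So in either case, $\G$ being not $\{1,2\}$-hop dismountable, it is $3$-hop dismountable: some node $u$ admits a delegation path $P^-$ from $u$ to some $v$ arriving through $e^-(v)$ and a delegation path $P^+$ from some $w$ to $u$ departing through $e^+(w)$, both of length at most $3$.

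The key point is that one of these two paths can always be replaced by a single edge. Indeed, since $\G$ is not $\{1,2\}$-hop dismountable, Lemma~\ref{lemma5} gives $V^0=\emptyset$, so $u\in V^+\cup V^-$. If $u\in V^+$, pick any $w'$ with $n^+(w')=u$; then $e^+(w')=\{w',u\}$, so the single edge $\{w',u\}$ is itself a valid choice for $P^+$, and the dismount pays at most $|P^-|+1\le 3+1=4$ edges. If $u\in V^-$, pick any $v'$ with $n^-(v')=u$; then $e^-(v')=\{u,v'\}$ is a valid choice for $P^-$, and the dismount pays at most $1+|P^+|\le 4$ edges. Either way the reduction costs at most $4$ edges, which completes the argument.

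I do not anticipate a real obstacle here: all the structural ingredients ($V^0=\emptyset$ under non $\{1,2\}$-hop dismountability, and $k$-hop $\Rightarrow$ $\{1,2,3\}$-hop dismountability) are already in place, and the only genuinely new remark is the elementary observation that being the earliest (resp.\ latest) neighbour of \emph{some} node lets one shortcut the corresponding delegation path to length $1$. The one thing to be careful about is the bookkeeping of the case split on $k$ — in particular that $k=2$ and $k=3$ must be handled directly, since Theorem~\ref{thm:threehopdismountable} only speaks about $k>3$.
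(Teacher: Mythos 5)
Your proposal is correct and follows essentially the same route as the paper's proof: split on whether the dismountable node is $\{1,2\}$-hop dismountable (cost $\le 2+2$) or only $3$-hop dismountable, and in the latter case use $V^0=\emptyset$ to place $u$ in $V^-\cup V^+$ so that one of the two delegation paths is a single edge, giving cost $\le 3+1$. Your write-up merely makes explicit the step the paper states tersely (choosing $w'$ with $n^+(w')=u$, resp.\ $v'$ with $n^-(v')=u$), so no further comment is needed.
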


\begin{proof}
  If $\G$ is $k$-hop dismountable, then by Theorem~\ref{thm:threehopdismountable}, there exists a node $u$ that is $\{1,2,3\}$-hop dismountable along two paths $P^-$ and $P^+$.
  If $u$ is $\{1,2\}$-hop dismountable, then $|P^-| \leq 2$ and $|P^+| \leq 2$, thus at most $4$ edges must be added to the spanner before recursing. Otherwise, we know that either $u \in V^-$ or $u \in V^+$, which implies that one of the two paths consists of a single edge, and since $u$ is $3$-hop dismountable, the other path has length at most $3$, thus at most $4$ edges must be added to the spanner before recursing.
\end{proof}

\subsection{Summary of the characterized structure}

Here, we summarize the obtained properties and show that they characterize (non) dismountability in a necessary and sufficient manner. Because non $\{1,2\}$-hop dismountable cliques could be of independent interest, we proceed in two steps, summarizing the structure of non $\{1,2\}$-hop dismountable cliques, then extending it to non $\{1,2,3\}$-hop dismountable cliques.
Let $E^-=\{e^-(v) : v\in V\}$ (resp. $E^+=\{e^+(v) : v\in V\}$) be the set of edges that are earliest (latest) for at least one node. Then we have:
\begin{theorem}[Summary of non $\{1,2\}$-hop dismountability]\label{lemma:2-hop-summary}~\\
    $\G$ is not $\{1,2\}$-hop dismountable if and only if:
    \begin{enumerate}
        \item $V^-$ and $V^+$ are the same size and form a partition of $V$.
        \item Every edge between $V^-$ and $V^+$ is later than all adjacent edges in $E^-$ and earlier than all adjacent edges in $E^+$.
    \end{enumerate}
\end{theorem}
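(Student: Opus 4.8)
The plan is to prove both directions of the equivalence, leaning heavily on the lemmas already established. For the forward direction ($\Rightarrow$), assume $\G$ is not $\{1,2\}$-hop dismountable. Item 1 is immediate: Lemma~\ref{lemma5} gives $V^0=\emptyset$, and combined with the first displayed lemma (that $\{V^-,V^+,V^0\}$ partitions $V$), we get that $V^-$ and $V^+$ partition $V$; Corollary~\ref{cor:matching}(3) gives that they have the same size. For item 2, consider any edge $\{a,b\}$ with $a\in V^-$ and $b\in V^+$. I need to show it is later than every adjacent edge in $E^-$ and earlier than every adjacent edge in $E^+$. An adjacent edge in $E^-$ is some $e^-(c)$ sharing an endpoint with $\{a,b\}$. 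If the shared endpoint is $a$, then $e^-(a)$ is by definition the earliest edge at $a$, so $\lambda(e^-(a))<\lambda(ab)$; more generally any $e^-(c)$ incident to $a$ (i.e. with $n^-(c)=a$) satisfies $\lambda(e^-(c))<\lambda(ab)$ because $a\in V^-$ needs earlier neighbors within $V^-$ — wait, this needs care. The cleaner route: if the shared endpoint is $b\in V^+$, then since $b\in V^+$ there is $b'\in V^-$ with $n^-(b)=b'$, and Lemma~\ref{cor:reciprocity} (reciprocity) together with Lemma~\ref{lemma10} will pin down the ordering. I would organize item 2 by casing on which endpoint is shared and invoking Lemma~\ref{cor:reciprocity} for the "restricted to the other part" comparisons and Lemma~\ref{lemma10}'s contrapositive is not needed here — rather, the non-$2$-hop-dismountability arguments from Lemma~\ref{lem:simple-2-hop} and Lemma~\ref{cor:reciprocity} directly give: any $e^-(\cdot)$ incident to $a$ or $b$ is earlier than $\lambda(ab)$, and symmetrically for $E^+$. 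The key subtlety is an $e^-(c)$ incident to $a$ where $c\in V^-$: here $n^-(c)=a$ would put $a\in V^-$ pointing at $a$, impossible since a node isn't its own earliest neighbor, so actually $e^-(c)$ incident to $a$ forces $c\in V^+$ and $n^-(c)=a$, and then reciprocity/simple-2-hop gives the comparison with $\lambda(ab)$.

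For the reverse direction ($\Leftarrow$), assume items 1 and 2 hold and show $\G$ is neither $1$-hop nor $2$-hop dismountable. For $1$-hop: a $1$-hop dismountable node $u$ would lie in $V^-\cap V^+$, which is empty by item 1 (disjointness of the partition). For $2$-hop: suppose some node $u$ is $2$-hop dismountable, witnessed by a length-$\le 2$ path $P^-$ arriving at some node through its earliest edge and a length-$\le 2$ path $P^+$ leaving some node through its latest edge, both certifying that $u$ can be removed. I would unfold the definition: $2$-hop dismountability of $u$ means there is a node $v$ reachable from $u$ in $\le 2$ hops arriving via $e^-(v)$, and a node $w$ reaching $u$ in $\le 2$ hops departing via $e^+(w)$ — but I must re-examine the precise definition from Section~\ref{sec:dismountability}: it is $u$ itself that delegates, so $P^-$ goes \emph{from $u$} to $v$ with last edge $e^-(v)$, and $P^+$ goes from $w$ \emph{to $u$} with first edge $e^+(w)$. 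So the last edge of $P^-$ is $e^-(v)\in E^-$; if $P^-$ has length $2$, say $u\to x\to v$, then $\lambda(ux)\le\lambda(xv)=\lambda(e^-(v))$. Now here is where item 2 bites: $e^-(v)$ is earlier than all adjacent edges in $E^-$, and I want to derive a contradiction with $\lambda(ux)\le\lambda(e^-(v))$ — but this requires knowing $ux$ relates to an $E^-$ edge. The argument should instead track parities across the $V^-/V^+$ partition: every edge incident to a node $v$ and strictly earlier than $e^-(v)$... no, $e^-(v)$ is the globally earliest edge at $v$, so nothing at $v$ is earlier. The real point: if $v\in V^+$ then $e^-(v)$ goes to $V^-$ (by item 1 and the matching structure, which follows from item 2), and any path of length $2$ into $v$ via $e^-(v)$ starts at a node $x$ with $\lambda(ux)\le\lambda(xv)$; then $x\in V^-$ or $V^+$, and item 2 forces $\lambda(xv)$ (being in $E^-$ as $e^-(v)$, adjacent to edge $xv$... wait $xv=e^-(v)$ so they're equal) — I need to compare $ux$ with the edge $xv$ at their shared endpoint $x$. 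If $x\in V^+$, then $e^-(v)=xv$ is incident to $x\in V^+$, and by item 2 any edge between $V^-$ and $V^+$ incident to $x$ is later than all adjacent $E^-$ edges; but is $ux$ in $E^-$? Not necessarily. This is the crux.

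I expect the main obstacle to be exactly this reverse-direction $2$-hop argument: carefully verifying that no length-$2$ delegation path can exist, which amounts to showing that the two "monotonicity across the cut" conditions in item 2 genuinely block the chain of inequalities a $2$-hop dismounting would require. I anticipate the right framing is to first derive from items 1 and 2 the matching corollaries (Corollary~\ref{cor:matching}) and reciprocity (Lemma~\ref{cor:reciprocity}) as \emph{consequences}, i.e. show item 2 is equivalent to the conjunction of "simple-2-hop", "$V^0=\emptyset$", and "reciprocity" conditions, and then each of Lemmas~\ref{lem:simple-2-hop}, \ref{lemma5}, \ref{cor:reciprocity} is stated as "if $\G$ is not $\{1,2\}$-hop dismountable then [condition]", so I want their converses packaged: any violation of $2$-hop dismountability must manifest as one of the configurations ruled out by item 2. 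A clean way is contrapositive: assume $u$ is $2$-hop (or $1$-hop) dismountable and exhibit a violation of item 1 or item 2 — for $1$-hop, $u\in V^-\cap V^+$ contradicts item 1; for $2$-hop with $P^-=u\to x\to v$, the node $v$ has $n^-(v)=x$, so $x\in V^-$ and $v\in V^+$; then $x$ also has some $V^+$-neighbor $v'$ with $n^-(v')=x$ (by the matching, a consequence of item 2), and the edge $uv$ or $ux$ comparison yields $\lambda(ux)\le\lambda(xv)$ with $xv=e^-(v)$ adjacent at $x$ to... and since $u$ is reached-from in $P^+$ too, chasing the $+$ side symmetrically and combining gives $\lambda$ values violating "later than all adjacent $E^-$, earlier than all adjacent $E^+$" at one of the cut edges. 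I would write this as a short case analysis (path lengths $1$ or $2$ on each side, shared-endpoint cases) and expect it to close cleanly once the matching/reciprocity consequences of item 2 are extracted first. The forward direction and item 1 are routine; the write-up effort is concentrated in making the $\Leftarrow$, $2$-hop case airtight.
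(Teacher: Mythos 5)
Your overall plan coincides with the paper's (forward direction assembled from the earlier lemmas; reverse direction by showing that Properties~1--2 block every $1$- and $2$-hop dismounting), but as written the argument has two genuine defects. The decisive one is the reverse-direction $2$-hop case, which you yourself flag as ``the crux'' and never close. The missing idea is \emph{which edge} Property~2 is applied to. By Property~1 a dismountable node $u$ lies in $V^+$ or $V^-$; take $u\in V^+$ without loss of generality (the other case is symmetric with the roles of $P^-$ and $P^+$ exchanged). Then the reception path $P^+$ exists for free (length~$1$, since $u\in V^+$), so only the emission path $P^-$ needs to be ruled out. A length-$1$ $P^-$ would place $u$ in $V^-$, contradicting disjointness. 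A length-$2$ path $P^-=u\to x\to w$ ends with $xw=e^-(w)$, so $x=n^-(w)\in V^-$ by definition of $V^-$; hence $ux$ is an edge \emph{between $V^-$ and $V^+$} and $xw\in E^-$ is adjacent to it at $x$, so Property~2 forces $\lambda(ux)>\lambda(xw)$ and $P^-$ is not temporal. You instead ask whether the first edge $ux$ belongs to $E^-$ --- that is the wrong question: Property~2 is invoked \emph{on} $ux$ as a cut edge, not by comparing $ux$ to members of $E^-$. Because you never fix which part $u$ lies in, you cannot see that $ux$ crosses the cut, and the argument stalls exactly there.

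The second defect is in the forward direction: your treatment of an $E^-$ edge incident to the $V^-$ endpoint $a$ rests on the false claim that $e^-(c)$ incident to $a$ with $c\neq a$ ``forces $c\in V^+$''. It does not: a node of $V^-$ may have its earliest neighbor inside $V^-$ (this occurs in the paper's own example, Figure~\ref{fig:ndis-npiv-8}), and reciprocity (Lemma~\ref{cor:reciprocity}) only covers the case $c\in V^+$. No classification of $c$ is needed: if $f=e^-(c)$ is incident to $a$ and $\lambda(f)>\lambda(ab)$, then $b\to a\to c$ is a temporal path arriving through $e^-(c)$, and since $b\in V^+$ already delegates its receptions in one hop, $b$ is $2$-hop dismountable --- contradiction. (Your handling of $E^-$ edges at $b\in V^+$ is correct: by Property~1 the only such edge is $e^-(b)$ itself.) Both gaps are repairable along the lines above, and the repaired proof is essentially the paper's, but the proposal as it stands is not complete.
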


\begin{proof}
  $(\implies)$ Suppose that $\G$ is not $\{1,2\}$-hop dismountable. Property~$1$ is deduced from Corollary~\ref{cor:matching} and Lemma~\ref{lemma5}.
  Now, consider the edge $uv$ where $u\in V^-$ and $v\in V^+$. Because $u \in V^-$ and Property~$1$, it has only one edge in $E^+$ incident to it, namely its latest edge $e^+(u)$. Similarly,
  $v \in V^+$ has only its earliest edge $e^-(v)$ that is part of $E^-$.

  If $uv$ was earlier than an adjacent edge $f \in E^-$, then $f$ must be incident to $u$ (as the only one incident to $v$ in $E^-$ is $e^-(v)$).
  Moreover, since $f$ is later than $uv$ and $f \neq e^-(u)$, there is a 2-hop temporal path from $v$ to the earliest edge of another node, making $v$ 2-hop dismountable, leading to a contradiction.
  Hence $uv$ must be later than any adjacent edge from $E^-$ and a similar argument shows that it is earlier than any adjacent edge from $E^+$.

  $(\impliedby)$  Assume that the above properties hold. Suppose by contradiction that there exist a node $v$ that is dismountable.
  Since $V^-$ and $V^+$ form a partition of $V$, suppose that $v \in V^+$ (the case $v \in V^-$ being symmetrical).

  Observe that $V^-\cap V^+=\emptyset$ is equivalent to saying that $\G$ is not 1-hop dismountable, meaning that $v$ must be 2-hop dismountable.
  Since $v\in V^+$, $v$ is 2-hop dismountable if there exists a 2-hop temporal path $P$ on nodes $v,x,w$ where $x=n^-(w)$. By definition of $V^-$, this means that $x\in V^-$.

  Finally, we have that $vx$ is an edge between $V^-$ and $V^+$ and $xw\in E^-$.
  Thus, by Property~$2$ we have that $\lambda(vx)>\lambda(xw)$, hence $P$ is not a temporal path and $v$ is not 2-hop dismountable.
\end{proof}

Let $M^-=\{e^-(v) : v\in V^+\}$ (resp. $M^+=\{e^+(v) : v\in V^-\}$) be the set of edges between $V^-$ and $V^+$ that are in $E^-$ (resp. in $E^+$). By Corollary~\ref{cor:matching}, these two sets form perfect matchings when $\G$ is not $2$-hop dismountable.

\begin{theorem}[Summary of non $k$-hop dismountability]\label{lemma:non-1,2,3-hop}~\\
  $\G$ is not $k$-hop dismountable if and only if:
  \begin{enumerate}
      \item $V^-$ and $V^+$ are the same size and form a partition of $V$.
      \item Every edge between $V^-$ and $V^+$ is later than all adjacent edges in $E^-$ and earlier than all adjacent edges in $E^+$.
      \item For every edge $e$ within the part $V^-$ (resp. $V^+$), the label of $e$ cannot be between the labels of the two incident edges of $M^-$ (resp. $M^+$).
  \end{enumerate}
\end{theorem}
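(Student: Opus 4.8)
The plan is to prove the two implications separately, reading ``$\G$ is not $k$-hop dismountable'' as ``not $k$-hop dismountable for any $k\ge 1$'', which by Theorem~\ref{thm:threehopdismountable} is equivalent to ``not $\{1,2,3\}$-hop dismountable''; so everything reduces to characterizing non $\{1,2,3\}$-hop dismountability on top of the already-established Theorem~\ref{lemma:2-hop-summary}, Lemma~\ref{lemma10} and Corollary~\ref{cor:matching}. For the forward direction, assume $\G$ is not $\{1,2,3\}$-hop dismountable; then in particular it is not $\{1,2\}$-hop dismountable, so Properties~1 and~2 follow verbatim from Theorem~\ref{lemma:2-hop-summary}. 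For Property~3, I would take an arbitrary edge $\{a,b\}$ inside $V^-$; by Corollary~\ref{cor:matching}, $M^-$ is a perfect matching between $V^-$ and $V^+$, so $a$ and $b$ have distinct $M^-$-partners $p_a,p_b\in V^+$ with $e^-(p_a)=\{a,p_a\}$ and $e^-(p_b)=\{b,p_b\}$, and these are exactly the two edges of $M^-$ incident to $\{a,b\}$. Instantiating Lemma~\ref{lemma10} with its $x,y$ taken to be $p_a,p_b$ (so that its $u,v$ become $a,b$) yields that $\lambda(ab)$ cannot lie between $\lambda(ap_a)$ and $\lambda(bp_b)$, i.e.\ Property~3 for that edge; an edge inside $V^+$ is handled symmetrically via $M^+$, $e^+$ and the ``respectively'' clause of Lemma~\ref{lemma10}.

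For the backward direction, assume Properties~1--3 hold. Properties~1 and~2 give, through Theorem~\ref{lemma:2-hop-summary}, that $\G$ is not $\{1,2\}$-hop dismountable, so by Theorem~\ref{thm:threehopdismountable} it suffices to show $\G$ is not $3$-hop dismountable. Suppose for contradiction that some node $z$ is $3$-hop dismountable; by Property~1 and symmetry assume $z\in V^+$. Since $z\in V^+$ always admits a reception-delegation path of length $1$, non $\{1,2\}$-hop dismountability forces the emission-delegation path of $z$ to have length exactly $3$, say on nodes $z,x_1,x_2,t$ with $\lambda(zx_1)<\lambda(x_1x_2)<\lambda(x_2t)$ and $\{x_2,t\}=e^-(t)$, hence $x_2=n^-(t)\in V^-$. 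If $x_1\in V^+$, then $x_1,x_2,t$ is an emission-delegation path of length $2$ for $x_1$, which together with a length-$1$ reception path makes $x_1$ $2$-hop dismountable --- a contradiction; so $x_1\in V^-$.

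Now let $p_1,p_2\in V^+$ be the $M^-$-partners of $x_1,x_2$, so that $e^-(p_1)=\{x_1,p_1\}$ and $e^-(p_2)=\{x_2,p_2\}$ are precisely the two edges of $M^-$ incident to the interior edge $\{x_1,x_2\}\subseteq V^-$. The edge $\{z,x_1\}$ runs between $V^+$ and $V^-$ and has the adjacent edge $\{x_1,p_1\}\in E^-$, so Property~2 gives $\lambda(zx_1)\ge\lambda(x_1p_1)$ (equality only if $z=p_1$); hence $\lambda(x_1x_2)>\lambda(x_1p_1)$. Applying Property~3 to $\{x_1,x_2\}$ then forces $\lambda(x_1x_2)>\lambda(x_2p_2)$ as well (labels being distinct), so $\lambda(x_2p_2)<\lambda(x_1x_2)<\lambda(x_2t)$. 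I would finish with a case split on $t$. If $t\in V^-$, the nodes $p_2,x_2,t$ are pairwise distinct and $p_2,x_2,t$ is a valid emission-delegation path of length $2$ for $p_2\in V^+$, so $p_2$ is $2$-hop dismountable --- contradiction. If $t\in V^+$, then $\{x_2,t\}=e^-(t)$ means $n^-(t)=x_2$, so $t$ is the $M^-$-partner of $x_2$, i.e.\ $t=p_2$ and $\{x_2,p_2\}=\{x_2,t\}$; but then $\lambda(x_1x_2)>\lambda(x_2p_2)=\lambda(x_2t)$ contradicts $\lambda(x_1x_2)<\lambda(x_2t)$. Either way we reach a contradiction, so $\G$ is not $3$-hop dismountable, and Theorem~\ref{thm:threehopdismountable} upgrades this to non $k$-hop dismountability for every $k$.

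The main obstacle, I expect, is not the high-level structure --- which is a short assembly of the three quoted statements --- but the degenerate coincidences of nodes in the backward direction. One must notice that $z=p_1$ does no harm, because Property~2 still supplies the (non-strict) inequality we use, and that the case $t=p_2$, rather than breaking the proof, is exactly the configuration contradicting the temporality of the path $z,x_1,x_2,t$. Getting these two cases right, together with checking that the two $M^-$-edges incident to an interior edge of a part really are the $e^-$ of its two partners (so that Lemma~\ref{lemma10} and Property~3 line up), is where care is needed.
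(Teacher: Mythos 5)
Your proof is correct and follows essentially the same route as the paper's: reduce via Theorem~\ref{thm:threehopdismountable} and Theorem~\ref{lemma:2-hop-summary} to showing that Property~3 is equivalent to non $3$-hop dismountability, get the forward direction from Lemma~\ref{lemma10}, and for the converse analyze a length-$3$ emission path $z,x_1,x_2,t$ with $x_1,x_2\in V^-$ against the incident $M^-$ edges. The only (cosmetic) difference is the endgame: the paper substitutes the endpoints $z$ and $t$ by the $M^-$-partners $p_1,p_2$ so that the middle edge directly violates Property~3, whereas you keep the original path and close with a case split on $t$; both hinge on the same inequalities supplied by Properties~2 and~3.
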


\begin{proof}
  By Theorem~\ref{lemma:2-hop-summary}, the first two statements are equivalent to $\G$ not being $\{1,2\}$-hop dismountable.
  By Theorem~\ref{thm:threehopdismountable}, a $k$-hop dismountable graph is $3$-hop dismountable, hence we just need to prove that adding the third statement is equivalent to non $3$-hop dismountability.

  ($\implies$) The third statement is implied by Lemma~\ref{lemma10}.

  ($\impliedby$) Assume that a node $v \in V^+$ is $3$-hop dismountable and no node is $2$-hop dismountable.
  This implies that there is a $3$-hop temporal path $P^-$ from $v$ to another node $u$ through $e^-(u)$.
  Let $P^-$ be on nodes $v,x_1,x_2,u$, and observe that $x_1$ and $x_2$ cannot be in $V^+$, as otherwise they are $2$-hop dismountable and $1$-dismountable respectively.
  Thus they are both in $V^-$.

  We may assume that $vx_1 = e^-(v)$ as otherwise we can substitute $v$ with the node $y_1$ such that $y_1x_1 = e^-(y_1)$ and Property~$2$ guarantees that $\lambda(y_1x_1)<\lambda(vx_1)$.
  Moreover, we may assume that $u\in V^+$ as otherwise we can substitute $u$ with the node $y_2\in V^+$ such that $n^-(y_2)=x_2$ and Property~$2$ once again guarantees that $\lambda(x_2u)<\lambda(x_2y_2)$.
  Summing up, we have that $vx_1\in M^-$, $x_2u\in M^-$ and the edge $x_1x_2$  is between two incident edges of $M^-$ in term of labels, contradicting Property~$3$.
\end{proof}



\section{Dismountability is all you need for sparse spanners}
\label{sec:spanners}

In the previous section, we gave a complete characterization of temporal cliques that are not dismountable. In particular, we showed that if a clique is $k$-hop dismountable for any $k$, then it must also be either $1$-hop, $2$-hop, or $3$-hop dismountable. Among other implications, our results imply that any minimal counter-example to the existence of a $4n$ spanner should possess all the structure described in Theorem~\ref{lemma:non-1,2,3-hop}.

If one does not care about the specific constant, but only about whether temporal cliques admit $O(n)$ spanners, then it is actually sufficient to exclude only $1$-hop and $2$-hop dismountability. Indeed, non $\{1,2\}$-hop dismountable cliques already possess the required structure for reducing the problem to a bipartite version, and the findings of~\cite{Hasso} imply that the upper bound on the size of a spanner in this setting may differ only by a constant factor from that of temporal cliques. 

Let us recall the bipartite version of the problem. The definition of the bi-clique itself follows from Theorem~\ref{lemma:2-hop-summary}. Namely,

\begin{corollary}
  \label{cor:reduction}
  Let $\G$ be a non $\{1,2\}$-hop dismountable graph and let $\G'$ be a temporal bi-clique obtained by restricting $\G$ to the edges between $V^-$ and $V^+$. Then, $\G'$ has the following properties:
  \begin{enumerate}
  \item $V^-$ and $V^+$ have the same size and form a partition of $V$.
  \item The set $M^-:=\{e^-(v) : v\in V^+\}=\{e^-(v) : v\in V^-\}$ is a perfect matching.
  \item The set $M^+:=\{e^+(v) : v\in V^-\}=\{e^+(v) : v\in V^+\}$ is a perfect matching.
  \end{enumerate}
\end{corollary}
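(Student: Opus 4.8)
The plan is to read off Corollary~\ref{cor:reduction} directly from the structural results already established for non $\{1,2\}$-hop dismountable cliques, treating it as little more than a restatement of Theorem~\ref{lemma:2-hop-summary} and Corollary~\ref{cor:matching} in bipartite language. First I would note that passing from $\G$ to $\G'$ changes nothing about the vertex set $V=V^-\cup V^+$, so Property~1 is verbatim Property~1 of Theorem~\ref{lemma:2-hop-summary} (equivalently, item~3 of Corollary~\ref{cor:matching} for the equal-size claim, together with Lemma~\ref{lemma5} for $V^0=\emptyset$). The only subtlety is that $e^-(v)$, $e^+(v)$ are originally defined relative to $\G$, not $\G'$, so one must check that restricting to the bipartite edge set does not change which incident edge of a node is earliest/latest.

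The key step is precisely that check, and it is exactly what Theorem~\ref{lemma:2-hop-summary}'s Property~2 gives us. For $v\in V^+$, Property~2 says every edge between $V^-$ and $V^+$ incident to $v$ is later than every adjacent edge of $E^-$; since $e^-(v)$ is the globally earliest edge at $v$ and it lies between $V^-$ and $V^+$ (it goes to $n^-(v)\in V^-$), and since $v$'s other bipartite edges are all later than $e^-(v)$ anyway, $e^-(v)$ remains the earliest edge of $v$ inside $\G'$. Hence $\{e^-(v):v\in V^+\}$ computed in $\G'$ coincides with the same set computed in $\G$, which by Corollary~\ref{cor:matching}(1) is a perfect matching between $V^-$ and $V^+$; by Reciprocity (Lemma~\ref{cor:reciprocity}), for $u\in V^-$ the edge $e^-(u)$ within $\G'$ (i.e.\ among bipartite edges) is the same matching edge, so $M^-=\{e^-(v):v\in V^+\}=\{e^-(v):v\in V^-\}$ as an identity of sets of edges of $\G'$. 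This establishes Property~2, and Property~3 follows by the symmetric argument using $E^+$, Corollary~\ref{cor:matching}(2), and the second half of Lemma~\ref{cor:reciprocity}.

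I do not expect any real obstacle here; the proof is a short bookkeeping argument. The one point that deserves a sentence of care, and which I would flag as the ``hard'' part only in the sense of being the only non-automatic line, is the reconciliation of the two descriptions of $M^-$ (as earliest edges of $V^+$-nodes versus earliest edges of $V^-$-nodes) — this is where Reciprocity is genuinely used rather than just Property~2, and it is worth stating explicitly that without Reciprocity one would only know $M^-$ is a matching saturating $V^+$, not that it equals the analogous matching from the $V^-$ side. Everything else is a direct transcription of Theorem~\ref{lemma:2-hop-summary} and Corollary~\ref{cor:matching}.
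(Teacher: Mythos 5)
Your proposal is correct and matches the paper's treatment: the paper states this corollary without a separate proof, presenting it as an immediate consequence of Theorem~\ref{lemma:2-hop-summary} (together with Corollary~\ref{cor:matching} and Lemma~\ref{cor:reciprocity}), which is exactly the bookkeeping you carry out. Your identification of Reciprocity as the one non-trivial ingredient — needed to see that the earliest/latest \emph{bipartite} edge at a $V^-$-node (resp.\ $V^+$-node) is its matching partner, so the two descriptions of $M^-$ and $M^+$ coincide — is the right reading of why the statement holds.
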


Corollary~\ref{cor:reduction} is illustrated on Figure~\ref{fig:bipartite}.

     \begin{figure}[h]
    \centering
    \begin{tikzpicture}[xscale=2,yscale=1.5]
      \tikzstyle{every node}=[inner sep=1.4pt,circle,black,fill=darkgray]
      \path (0,0) node (u1){};
      \path (0,1) node (u2){};
      \path (0,2) node (u3){};
      \path (0,3) node (u4){};
      \path (1,0) node (v1){};
      \path (1,1) node (v2){};
      \path (1,2) node (v3){};
      \path (1,3) node (v4){};

      \tikzstyle{every node}=[]
      \path (-.6,2) node{\Large $V^+$};
      \path (-.7,.5) node{\LARGE $\G$};
      \path (1.7,2) node{\Large $V^-$};

      \draw[lightgray] (u1) -- (v3);
      \draw[lightgray] (u1) -- (v4);
      \draw[lightgray] (u2) -- (v2);
      \draw[lightgray] (u2) -- (v4);
      \draw[lightgray] (u3) -- (v1);
      \draw[lightgray] (u3) -- (v3);
      \draw[lightgray] (u4) -- (v1);
      \draw[lightgray] (u4) -- (v2);
      \draw[thick,red] (u1) -- node[below=-2pt,pos=.1]{\tiny $+$} node[below=-2pt,pos=.92]{\Large $+$} (v1);
      \draw[thick,green] (u1) -- node[below=-2pt,pos=.18]{\Large $-$} node[above=-2pt,pos=.88]{\tiny $-$} (v2);
      \draw[thick,green] (u2) -- node[above=-2pt,pos=.18]{\Large $-$} node[below=-2pt,pos=.88]{\tiny $-$} (v1);
      \draw[thick,red] (u2) -- node[below=-2pt,pos=.1]{\tiny $+$} node[below=-2pt,pos=.92]{\Large $+$} (v3);
      \draw[thick,red] (u3) -- node[above=-2pt,pos=.1]{\tiny $+$} node[above=-2pt,pos=.92]{\Large $+$} (v2);
      \draw[thick,green] (u3) -- node[below=-2pt,pos=.18]{\Large $-$} node[above=-2pt,pos=.88]{\tiny $-$} (v4);
      \draw[thick,green] (u4) -- node[above=-2pt,pos=.18]{\Large $-$} node[below=-2pt,pos=.88]{\tiny $-$} (v3);
      \draw[thick,red] (u4) -- node[above=-2pt,pos=.1]{\tiny $+$} node[above=-2pt,pos=.92]{\Large $+$} (v4);

      \tikzstyle{every edge}=[draw,bend left]
      \draw[lightgray,bend left=15] (u1) edge (u2);
      \draw[lightgray] (u1) edge (u3);
      \draw[thick,red] (u1) edge node[below left=-4pt,pos=.04]{\Large $+$} (u4);
      \draw[thick,red,bend left=15] (u2) edge node[left=-2pt,pos=.13]{\Large $+$} node[left=-2pt,pos=.87]{\Large $+$} (u3);
      \draw[lightgray] (u2) edge (u4);
      \draw[thick,red,bend left=15] (u3) edge node[right=-2pt,pos=.85]{\Large $+$} (u4);

      \tikzstyle{every edge}=[draw,bend right]
      \draw[thick,green,bend left=15] (v1) edge node[right=-2pt,pos=.18]{\Large $-$} node[right=-2pt,pos=.88]{\Large $-$} (v2);
      \draw[lightgray] (v1) edge (v3);
      \draw[lightgray] (v1) edge (v4);
      \draw[lightgray,bend left=15] (v2) edge (v3);
      \draw[lightgray] (v2) edge (v4);
      \draw[thick,green,bend left=15] (v3) edge node[right=-2pt,pos=.18]{\Large $-$} node[right=-2pt,pos=.88]{\Large $-$} (v4);

    \end{tikzpicture}
    ~~~~~~~~~~~
    \begin{tikzpicture}[xscale=2,yscale=1.5]
      \tikzstyle{every node}=[inner sep=1.4pt,circle,black,fill=darkgray]
      \path (0,0) node (u1){};
      \path (0,1) node (u2){};
      \path (0,2) node (u3){};
      \path (0,3) node (u4){};
      \path (1,0) node (v1){};
      \path (1,1) node (v2){};
      \path (1,2) node (v3){};
      \path (1,3) node (v4){};

      \tikzstyle{every node}=[]
      \path (-.4,2) node{\Large $V^+$};
      \path (-.5,.5) node{\LARGE $\G'$};
      \path (1.6,2) node{\Large $V^-$};

      \draw[lightgray] (u1) -- (v3);
      \draw[lightgray] (u1) -- (v4);
      \draw[lightgray] (u2) -- (v2);
      \draw[lightgray] (u2) -- (v4);
      \draw[lightgray] (u3) -- (v1);
      \draw[lightgray] (u3) -- (v3);
      \draw[lightgray] (u4) -- (v1);
      \draw[lightgray] (u4) -- (v2);
      \draw[thick,red] (u1) -- node[below=-2pt,pos=.1]{\small $+$} node[below=-2pt,pos=.92]{\small $+$} (v1);
      \draw[thick,green] (u1) -- node[below=-2pt,pos=.18]{\small $-$} node[above=-2pt,pos=.88]{\small $-$} (v2);
      \draw[thick,green] (u2) -- node[above=-2pt,pos=.18]{\small $-$} node[below=-2pt,pos=.88]{\small $-$} (v1);
      \draw[thick,red] (u2) -- node[below=-2pt,pos=.1]{\small $+$} node[below=-2pt,pos=.92]{\small $+$} (v3);
      \draw[thick,red] (u3) -- node[above=-2pt,pos=.1]{\small $+$} node[above=-2pt,pos=.92]{\small $+$} (v2);
      \draw[thick,green] (u3) -- node[below=-2pt,pos=.18]{\small $-$} node[above=-2pt,pos=.88]{\small $-$} (v4);
      \draw[thick,green] (u4) -- node[above=-2pt,pos=.18]{\small $-$} node[below=-2pt,pos=.88]{\small $-$} (v3);
      \draw[thick,red] (u4) -- node[above=-2pt,pos=.1]{\small $+$} node[above=-2pt,pos=.92]{\small $+$} (v4);
    \end{tikzpicture}

    \caption{A possible configuration of the earliest and latest edges of a non $\{1,2\}$-hop dismountable clique and the corresponding bi-clique.}
    \label{fig:bipartite}
\end{figure}

First, observe that $\G'$ is always temporally connected. Indeed, every node can reach the nodes of the opposite part using a single edge, and can reach the other nodes of its own part using $2$-hop temporal paths that start with the earliest edges. Furthermore, since $\G'$ is based on the same node set as $\G$, any temporal spanner of $\G'$ is a temporal spanner of~$\G$.

Last but not least, thanks to the edges of $M^-$ and $M^+$, it is sufficient to focus on finding a spanner that achieves reachability \emph{from} $V^+$ \emph{to} $V^-$ (or alternatively, from $V^-$ to $V^+$). Indeed, whenever a node of $V^+$ can reach all the nodes of $V^-$, one can extend the corresponding path(s) from each node of $V^-$ by using an edge of $M^+$, collectively reaching all of $V^+$ (if this edge is already the last one of the temporal path, no such extension is needed). Similarly, the nodes of $V^-$ can reach each other by prefixing the same paths with edges of $M^-$. This feature facilitate the computation of a spanner in the bipartite graph.

\subsection{One-sided dismountability and recursion}
\label{sec:hasso}

Let $\G$ be a temporal bi-clique as defined above. For completeness, this section recalls the strategy used in~\cite{Hasso} to simplify the second part of the original proof and recover the $O(n \log n)$ result more elegantly.
Let us denote here by $n$ the size of each part $V^+$ and $V^-$ (total size $2n$).
The first component of the strategy is to divide the work into two parts, by splitting $V^+$ into two sets of nodes $X_1$ and $X_2$, each of size $n/2$, and recurse in two subproblems where the goal is to compute a spanner from $X_1$ to all of $V^-$, and a spanner from $X_2$ to all of $V^-$ (see Figure~\ref{fig:hasso}). The union of these two spanners ensures that every node of $V^+$ can reach all of $V^-$, which together with the matchings $M^-$ and $M^+$, guarantees that the resulting graph is temporally connected (as discussed above). The major observation is then that a one-sided version of dismountability can be used to reduce each of the subproblems until both parts have equal size. Indeed, as long as the target part (call it $T$, which is $V^-$ in the first step) is larger than the source part (call it $S$, which is $X_1$ or $X_2$ in the first step), then at least two nodes in $T$ have their latest neighbor coinciding in $S$. Thus, one of these nodes can delegate its receptions (relative to $S$) to the other, and be dismounted from the instance at the cost of selecting the corresponding two edges in the spanner. This is indeed sufficient because reachability \emph{from} $S$ \emph{to} $T$ is sufficient. One can then split the work, and recurse again. 

\begin{figure}[h]
\begin{tikzpicture}[yscale=.12,xscale=.95]
  \tikzstyle{every node}=[defnode,inner sep=.8pt]
  \path (0,26) node {};
  \path (0,25) node {};
  \path (0,24) node {};
  \path (0,23) node {};
  \path (0,22) node {};
  \path (0,21) node {};
  \path (0,20) node {};
  \path (0,19) node {};
  \path (0,18) node {};
  \path (0,17) node {};
  \path (0,16) node {};
  \path (0,15) node {};
  \path (0,14) node {};
  \path (0,13) node {};
  \path (0,12) node {};
  \path (0,11) node {};

  \path (1,26) node {};
  \path (1,25) node {};
  \path (1,24) node {};
  \path (1,23) node {};
  \path (1,22) node {};
  \path (1,21) node {};
  \path (1,20) node {};
  \path (1,19) node {};
  \path (1,18) node {};
  \path (1,17) node {};
  \path (1,16) node {};
  \path (1,15) node {};
  \path (1,14) node {};
  \path (1,13) node {};
  \path (1,12) node {};
  \path (1,11) node {};

  \tikzstyle{every node}=[defnode,inner sep=.8pt]
  \path (3,37) node {};
  \path (3,36) node {};
  \path (3,35) node {};
  \path (3,34) node {};
  \path (3,33) node {};
  \path (3,32) node {};
  \path (3,31) node {};
  \path (3,30) node {};

  \path (4,37) node {};
  \path (4,36) node {};
  \path (4,35) node {};
  \path (4,34) node {};
  \path (4,33) node {};
  \path (4,32) node {};
  \path (4,31) node {};
  \path (4,30) node {};
  \path (4,29) node {};
  \path (4,28) node {};
  \path (4,27) node {};
  \path (4,26) node {};
  \path (4,25) node {};
  \path (4,24) node {};
  \path (4,23) node {};
  \path (4,22) node {};

  \tikzstyle{every node}=[defnode,inner sep=.8pt]
  \path (3,7) node {};
  \path (3,6) node {};
  \path (3,5) node {};
  \path (3,4) node {};
  \path (3,3) node {};
  \path (3,2) node {};
  \path (3,1) node {};
  \path (3,0) node {};

  \path (4,15) node {};
  \path (4,14) node {};
  \path (4,13) node {};
  \path (4,12) node {};
  \path (4,11) node {};
  \path (4,10) node {};
  \path (4,9) node {};
  \path (4,8) node {};
  \path (4,7) node {};
  \path (4,6) node {};
  \path (4,5) node {};
  \path (4,4) node {};
  \path (4,3) node {};
  \path (4,2) node {};
  \path (4,1) node {};
  \path (4,0) node {};

  \tikzstyle{every node}=[defnode,inner sep=.8pt]
  \path (6,37) node {};
  \path (6,36) node {};
  \path (6,35) node {};
  \path (6,34) node {};
  \path (6,33) node {};
  \path (6,32) node {};
  \path (6,31) node {};
  \path (6,30) node {};

  \path (7,37) node {};
  \path (7,36) node {};
  \path (7,35) node {};
  \path (7,34) node {};
  \path (7,33) node {};
  \path (7,32) node {};
  \path (7,31) node {};
  \path (7,30) node {};

  \tikzstyle{every node}=[defnode,inner sep=.8pt]
  \path (6,7) node {};
  \path (6,6) node {};
  \path (6,5) node {};
  \path (6,4) node {};
  \path (6,3) node {};
  \path (6,2) node {};
  \path (6,1) node {};
  \path (6,0) node {};

  \path (7,7) node {};
  \path (7,6) node {};
  \path (7,5) node {};
  \path (7,4) node {};
  \path (7,3) node {};
  \path (7,2) node {};
  \path (7,1) node {};
  \path (7,0) node {};

  \tikzstyle{every node}=[defnode,inner sep=.8pt]
  \path (9,37) node {};
  \path (9,36) node {};
  \path (9,35) node {};
  \path (9,34) node {};

  \path (10,37) node {};
  \path (10,36) node {};
  \path (10,35) node {};
  \path (10,34) node {};
  \path (10,33) node {};
  \path (10,32) node {};
  \path (10,31) node {};
  \path (10,30) node {};

  \tikzstyle{every node}=[defnode,inner sep=.8pt]
  \path (9,23) node {};
  \path (9,22) node {};
  \path (9,21) node {};
  \path (9,20) node {};

  \path (10,27) node {};
  \path (10,26) node {};
  \path (10,25) node {};
  \path (10,24) node {};
  \path (10,23) node {};
  \path (10,22) node {};
  \path (10,21) node {};
  \path (10,20) node {};

  \tikzstyle{every node}=[defnode,inner sep=.8pt]
  \path (9,17) node {};
  \path (9,16) node {};
  \path (9,15) node {};
  \path (9,14) node {};

  \path (10,17) node {};
  \path (10,16) node {};
  \path (10,15) node {};
  \path (10,14) node {};
  \path (10,13) node {};
  \path (10,12) node {};
  \path (10,11) node {};
  \path (10,10) node {};

  \tikzstyle{every node}=[defnode,inner sep=.8pt]
  \path (9,3) node {};
  \path (9,2) node {};
  \path (9,1) node {};
  \path (9,0) node {};

  \path (10,7) node {};
  \path (10,6) node {};
  \path (10,5) node {};
  \path (10,4) node {};
  \path (10,3) node {};
  \path (10,2) node {};
  \path (10,1) node {};
  \path (10,0) node {};

  \tikzstyle{every node}=[defnode,inner sep=.8pt]
  \path (12,37) node {};
  \path (12,36) node {};
  \path (12,35) node {};
  \path (12,34) node {};

  \path (13,37) node {};
  \path (13,36) node {};
  \path (13,35) node {};
  \path (13,34) node {};

  \tikzstyle{every node}=[defnode,inner sep=.8pt]
  \path (12,26) node {};
  \path (12,25) node {};
  \path (12,24) node {};
  \path (12,23) node {};

  \path (13,26) node {};
  \path (13,25) node {};
  \path (13,24) node {};
  \path (13,23) node {};

  \tikzstyle{every node}=[defnode,inner sep=.8pt]
  \path (12,14) node {};
  \path (12,13) node {};
  \path (12,12) node {};
  \path (12,11) node {};

  \path (13,14) node {};
  \path (13,13) node {};
  \path (13,12) node {};
  \path (13,11) node {};

  \tikzstyle{every node}=[defnode,inner sep=.8pt]
  \path (12,3) node {};
  \path (12,2) node {};
  \path (12,1) node {};
  \path (12,0) node {};

  \path (13,3) node {};
  \path (13,2) node {};
  \path (13,1) node {};
  \path (13,0) node {};

  \tikzstyle{every node}=[font=\footnotesize]
  \path (2,39) node {Split};
  \path (5,39) node {Dismount};
  \path (8,39) node {Split};
  \path (11,39) node {Dismount};

  \path (0.5,18) node {$\leadsto$};
  \path (3.5,30) node {$\leadsto$};
  \path (3.5,6) node {$\leadsto$};
  \path (6.5,32) node {$\leadsto$};
  \path (6.5,4) node {$\leadsto$};
  \path (9.5,34) node {$\leadsto$};
  \path (9.5,23) node {$\leadsto$};
  \path (9.5,14) node {$\leadsto$};
  \path (9.5,2) node {$\leadsto$};
  \path (12.5,35) node {$\leadsto$};
  \path (12.5,24) node {$\leadsto$};
  \path (12.5,12) node {$\leadsto$};
  \path (12.5,1) node {$\leadsto$};

  \tikzstyle{every path}=[thick,shorten <=5pt,shorten >=5pt]
  \draw (1.5,19) edge[->] (2.5,26);
  \draw (1.5,17) edge[->] (2.5,10);
  \draw (4.5,32) edge[->] (5.5,32);
  \draw (7.5,32) edge[->] (8.5,32);
  \draw (7.5,31) edge[->] (8.5,25);
  \draw (4.5,5) edge[->] (5.5,5);
  \draw (7.5,6) edge[->] (8.5,12);
  \draw (7.5,5) edge[->] (8.5,5);
  \draw (10.5,35) edge[->] (11.5,35);
  \draw (10.5,24) edge[->] (11.5,24);
  \draw (10.5,12) edge[->] (11.5,12);
  \draw (10.5,2) edge[->] (11.5,2);

  \draw (2.5,18.5) edge[ultra thick,lightgray] (13.3,18.5);
  \draw (13.08,18.5) edge[ultra thick,lightgray,dash pattern=on 3pt off 3pt] (13.8,18.5);

  \draw (8.5,28.5) edge[ultra thick,lightgray] (13.3,28.5);
  \draw (13.08,28.5) edge[ultra thick,lightgray,dash pattern=on 3pt off 3pt] (13.8,28.5);

  \draw (8.5,8.5) edge[ultra thick,lightgray] (13.3,8.5);
  \draw (13.08,8.5) edge[ultra thick,lightgray,dash pattern=on 3pt off 3pt] (13.8,8.5);
\end{tikzpicture}
\caption{\label{fig:hasso}One-sided dismountability and recursion (strategy from~\cite{Hasso}).}
\end{figure}

In each step, dismountability may work beyond balancing the two parts, but in any case, it will work \emph{at least} to that point. In each step, the number of selected edges when dismounting is $O(|T|)$, which yields the following recurrence on the size of a spanner: $size(n) \le 2\cdot size(n/2) + O(n) = O(n\log n)$ edges (by the master theorem).

\subsection{Epilogue}

The above strategy from~\cite{Hasso} completes the overall proof that temporal cliques admit spanners of size $O(n \log n)$, using dismountability only. This proof is constructive. The corresponding algorithm is polynomial and is summarized in Algorithm~\ref{algo}.

\begin{algorithm}
  \begin{algorithmic}
    \Require A temporal clique $\G=(V,E,\lambda)$\smallskip
    \State \hspace*{-10pt}\texttt{Spanner}($\G$):
    \If {$\exists$ $1$-hop or $2$-hop dismountable node $v\in V$}
    \State $E' \gets $ edges used in the dismountability paths \Comment{Note that $|E'|\le 4$}
    \State \Return $E'~\cup$ \texttt{Spanner}($\G \setminus \{v\}$)
    \Else
    \State $\G' \gets \G$ restricted to the edges between $V^+$ and $V^-$
    \State \Return \texttt{Bipartite\_Spanner}($\G'$) \Comment{At most $O(n \log n)$ edges}
    \EndIf
\end{algorithmic}
\caption{\label{algo}Recursive algorithm for computing a $O(n \log n)$ spanner in a temporal clique. The \texttt{Bipartite\_Spanner} function corresponds to the part of the algorithm from~\cite{Hasso} outlined in Section~\ref{sec:hasso}.}
\end{algorithm}

The search for $1$-hop or $2$-hop dismountable node can be done as follows. First, mark all the earliest and latest edge at every node. Then, for every node $v$, compute the subset of nodes that $v$ can reach in at most two hops, arriving through the earliest edge at these nodes (memorize at least one such path for each of these nodes), and compute the subset of nodes that can reach $v$ in at most two hops by departing through their latest edge (and memorize at least one such path). If both subsets are non-empty, then $v$ is $\{1,2\}$-hop dismountable through these paths. It is not hard to see that each of these steps runs in polynomial time, and so does the \texttt{Bipartite\_Spanner} algorithm (Corollary 4.8 in~\cite{Hasso}).

\section{Dismountability, Pivotability, and Full-range graphs}
\label{sec:misc}

In this section, we provide a collection of further results and observations related to dismountability. In particular, we discuss some relations between dismountability and pivotability, and present a class of temporal graphs that are both dismountable and pivotable.

For simplicity, let us recall the definition of the concept of a pivot edge in the setting of simple and proper graphs. This concept can be adapted to more general settings, as well as to a version with pivot nodes or larger pivotal structures (irrelevant in the present paper).

\begin{definition}[Pivotability]
  Let $\G=(V,E,\lambda)$ be a simple and proper temporal graph in \textsf{TC}, not necessarily a clique. A pivot edge is an edge $e\in E$ such that for every node $v\in V$, there exists a temporal path from $v$ to $e$ (the path ends by traversing $e$ at time $\lambda(e)$), and there exists a temporal path from $e$ to $v$ (the path starts by traversing $e$ at time $\lambda(e)$). A temporal graph that admits a pivot edge is said to be pivotable.
\end{definition}

A simple observation in~\cite{BFJ03} implies that if a pivot edge $e$ exists, then there must also exist a temporal \emph{in-tree} from all the nodes to $e$, and a temporal \emph{out-tree} from $e$ to all the nodes, the union of which gives a $O(n)$ spanner (indeed, a $2n-3$ spanner).
The following theorem establishes a relation between dismountability and pivotability.

\begin{theorem}
Let $\G=(V,E,\lambda)$ be a temporal clique. If $\G$ is recursively $k$-hop dismountable, then $\G$ is pivotable.
\end{theorem}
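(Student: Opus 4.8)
The plan is to prove this by induction on the recursive dismounting order, showing that the last edge inserted into the spanner (equivalently, the edge joining the two nodes that remain when the clique is reduced to $n=2$) is a pivot edge of the whole clique $\G$. Let the recursive $k$-hop dismounting remove nodes in the order $u_1, u_2, \dots, u_{n-2}$, leaving a final pair $\{a,b\}$, and let $e^* = ab$. When $u_i$ is dismounted, two temporal paths $P_i^-$ (arriving at some $v_i$ through $e^-(v_i)$ in the current subgraph $\G_i = \G \setminus \{u_1,\dots,u_{i-1}\}$) and $P_i^+$ (departing from some $w_i$ through $e^+(w_i)$ in $\G_i$) are selected, witnessing that $u_i$ can delegate its emissions to $v_i$ and its receptions to $w_i$. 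The key claim to establish is the following invariant: in the graph $S_i$ consisting of $e^*$ together with all edges selected at steps $i, i+1, \dots, n-2$ (restricted to the vertex set of $\G_i$), every node of $\G_i$ can reach $e^*$ by a temporal path ending with $e^*$ at time $\lambda(e^*)$, and $e^*$ can reach every node of $\G_i$ by a temporal path starting with $e^*$ at time $\lambda(e^*)$.

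I would prove the invariant by downward induction on $i$. The base case $i = n-1$ (conceptually, after all dismountings) is the graph on $\{a,b\}$ with the single edge $e^*$, where the property is immediate. For the inductive step, assume the property holds in $S_{i+1}$ on vertex set $V(\G_{i+1}) = V(\G_i) \setminus \{u_i\}$; I must show it holds in $S_i = S_{i+1} \cup P_i^- \cup P_i^+$ on vertex set $V(\G_i)$. The only new vertex is $u_i$. To see that $u_i$ reaches $e^*$: follow $P_i^-$ from $u_i$ to $v_i$, arriving through $e^-(v_i)$; since $v_i \in V(\G_{i+1})$, by the inductive hypothesis $v_i$ reaches $e^*$ in $S_{i+1}$, and the crucial point — exactly the ``oblivious recursion'' argument spelled out in the $k$-hop dismountability discussion — is that the path out of $v_i$ in $S_{i+1}$ either departs after $\lambda(e^-(v_i))$ (so concatenation gives a valid temporal path), or departs through $e^-(v_i)$ itself, in which case $n^-(v_i)$ already reaches the same target and the composed path through $n^-(v_i)$ works. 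Symmetrically, $e^*$ reaches $u_i$ by first reaching $w_i$ in $S_{i+1}$ and then prepending $P_i^+$ (using that $w_i$ departs from $e^+(w_i)$, the latest edge of $w_i$, so nothing scheduled earlier in $S_{i+1}$ interferes). Finally, all \emph{old} vertices $z \in V(\G_{i+1})$ still reach $e^*$ and are reached by $e^*$ because $S_{i+1} \subseteq S_i$ and adding edges only increases reachability. This establishes the invariant at level $i$, and at $i = 1$ it states precisely that $e^*$ is a pivot edge of $\G = \G_1$.

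The main obstacle — and the step deserving the most care — is the ``oblivious'' concatenation argument: when we glue the freshly selected path $P_i^-$ onto a path that already lived in $S_{i+1}$, we must be sure the glued object is genuinely a temporal path, i.e. that times are non-decreasing across the junction at $v_i$, and handle the subtle case where the continuation in $S_{i+1}$ reuses $e^-(v_i)$. The paper already flags this subtlety in the paragraph introducing $k$-hop dismountability (the remark that $e^-(v)$ ``may still exist in $\G \setminus \{v\}$''), and the resolution is the same: reroute through $n^-(v_i)$, which by definition of the earliest neighbour can substitute for $v_i$ as an emission delegate, so some valid path to the target survives. A secondary point to check is that the vertex set of each $S_i$ is exactly $V(\G_i)$ — i.e. that the paths $P_j^-, P_j^+$ selected at later steps $j > i$ never pass through $u_i$; but this is automatic, since at step $j$ the node $u_i$ has already been removed and $\G_j$ does not contain it. Once these points are nailed down, the induction closes cleanly and yields a pivot edge, hence (via the in-tree/out-tree observation from~\cite{BFJ03}) a $2n-3$ spanner, independently of $k$.
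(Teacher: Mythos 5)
Your proposal is correct and follows essentially the same route as the paper: induct along the dismounting order, show the edge $ab$ joining the two surviving nodes is a pivot edge, and handle the new node $u_i$ by composing $P_i^-$ (resp.\ $P_i^+$) with the inductively guaranteed path from $v_i$ to $ab$ (resp.\ from $ab$ to $w_i$), using the earliest/latest-edge property to validate the junction and rerouting through $n^-(v_i)$ in the degenerate case. The only (harmless) difference is that you carry the induction hypothesis on the selected spanner subgraph $S_{i+1}$ rather than on the residual clique $\G\setminus\{u_1\}$ as the paper does, which is a slight strengthening but not a different argument.
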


\begin{proof}
Let us denote with $\langle u_i,P^-_i,P^+_i \rangle_{i\in [n-2]}$ the sequence of nodes and temporal paths witnessing that $\G$ is recursively $k$-hop dismountable; i.e. $P^-_i$ and $P^+_i$ are temporal paths in $\G \setminus \{v_j\ \mid j\in [i]\}$
from $u_i$ to some node $v$ arriving through $e^-(v)$, and from some node $w$ to $u_i$ starting with $e^+(w)$, respectively.
Moreover, let $x$ and $y$ be the last two remaining nodes after $\G$ has been recursively $k$-hop dismounted, and let $t=\lambda(xy)$.
We will prove by induction that $xy$ is a pivot edge in $\G$; namely, every $u_i$ can reach either $x$ or $y$ before time $\lambda(xy)$, and it can be reached by either $x$ or $y$ after time $\lambda(xy)$.

The base case is when $\G$ contains only 3 nodes, namely $u,x,y$.
Since $P^-_1$ is a temporal path from $u$ to $x$ or $y$ arriving through $e^-(x)$ or $e^-(y)$, we have that $u$ can reach $xy$.
As $P^+_1$ is a temporal path from $x$ or $y$ to $u$ starting with $e^+(x)$ or $e^+(y)$, we have that $u$ can be reached by $xy$.

Induction: Assume that all recursively $k$-hop dismountable graphs of size $n-1$ are pivotable with respect to an edge $xy$, and let $\G$ be a recursively $k$-hop dismountable graph of size $n$.
Let $\langle u_i,P^-_i,P^+_i \rangle_{i\in [n-2]}$ be the sequence witnessing that $\G$ is recursively $k$-hop dismountable.
By the induction hypothesis, in $\G\setminus \{u_1\}$ every node $u_i, i\in \{2,3,\dots,n-2\}$ can reach and be reached by $xy$.
We will prove that $u_1$ can reach either $x$ or $y$ before time $t$ and can be reached by either $x$ or $y$ after time $t$.
By definition of $P^-_1$ node $u_1$ can reach some other node $v$ arriving through the edge $e^-(v)$.
If $v$ is $x$ or $y$ then we have that $\lambda(e^-(v))\leq t$ and therefore $u_1$ can reach $xy$.
Otherwise, $v \in \{u_i \mid i\in \{2,3,\dots,n-2\}\}$. In this case, there must exist a temporal path in $\G\setminus \{u_1\}$ from $v$ to $xy$ arriving before time $t$.
As $u_1$ can reach $v$ through $e^-(v)$ using $P^-_1$, we can compose $P^-_1$ to the path from $v$ to $xy$ given by the induction hypothesis obtaining a temporal path from $u_1$ to $xy$ arriving before time $t$.
The proof that $xy$ can reach $u_1$ after time $t$ is symmetrical.
\end{proof}

\begin{corollary}
  Cliques that are recursively $k$-hop dismountable admit $2n-3$ spanners.
\end{corollary}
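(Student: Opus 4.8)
The plan is to chain the preceding theorem with the observation attributed to~\cite{BFJ03} that was recalled just before it, so the corollary reduces to a short counting argument. First I would invoke the theorem: since $\G$ is a recursively $k$-hop dismountable temporal clique, it is pivotable, hence it admits a pivot edge $e=xy$. By definition of pivotability, every node $v\in V$ has a temporal path to $e$ that ends by traversing $e$ at time $\lambda(e)$, and a temporal path from $e$ to $v$ that starts by traversing $e$ at time $\lambda(e)$.

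Next I would extract a spanner from these paths exactly as in~\cite{BFJ03}. Taking the union over all $v$ of the ``$v$ to $e$'' paths and pruning redundant edges yields a temporal in-tree $T^-$ in which every node still reaches $e$ no later than $\lambda(e)$; symmetrically, the ``$e$ to $v$'' paths contain a temporal out-tree $T^+$ from which $e$ reaches every node no earlier than $\lambda(e)$. Setting $S:=T^-\cup T^+$, any ordered pair $(u,v)$ is handled by concatenating the $u$-to-$e$ path of $T^-$ (arriving by time $\lambda(e)$) with the $e$-to-$v$ path of $T^+$ (departing at time $\lambda(e)$); the concatenation is a temporal path, so $S$ is a temporal spanner of $\G$.

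Finally I would count edges. A spanning in-tree and a spanning out-tree on $n$ nodes have $n-1$ edges each, and both contain the pivot edge $e$, so $|S|\le 2(n-1)-1 = 2n-3$. This gives the claimed bound. I do not expect any real obstacle here: the only point that deserves an explicit line is that the in-tree and the out-tree genuinely share the pivot edge (so that the count is $2n-3$ rather than $2n-2$), which is immediate because, by the definition of a pivot edge, $e$ is the last edge of every path in the first family and the first edge of every path in the second.
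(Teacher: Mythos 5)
Your proof is correct and follows exactly the route the paper intends: the corollary is an immediate combination of the preceding theorem (recursive $k$-hop dismountability implies pivotability) with the observation from~\cite{BFJ03}, recalled just before that theorem, that a pivot edge yields an in-tree and an out-tree whose union is a spanner of size $2n-3$. Your explicit remark that the two trees share the pivot edge, which is what brings the count down from $2n-2$ to $2n-3$, is a correct and welcome detail.
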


\subsection{Full-range temporal graphs}

Let us start by defining the \emph{time-compressed} version of a simple and proper temporal graph (not necessarily a clique) as follows.

\begin{definition}[compressed labeling]
  Let $\G=(V,E,\lambda)$ be a simple and proper temporal graph of lifetime $L$ (i.e. $\lambda$ takes its values in the range $[1,L]$). The labeling $\lambda$ is said to be \emph{time-compressed} (or simply compressed) if for all edge $e$, either $e$ is adjacent to an edge $e'$ such that $\lambda(e')=\lambda(e)-1$, or $\lambda(e)=1$.
\end{definition}

In other words, every label of the graph is as small as it can be while preserving the local ordering among edges of $\G$. Figure~\ref{fig:isomorphic-graphs} (left and middle) shows an example of temporal graph and its compressed version. (Again, this concept could be defined in more general settings than simple and proper graphs, bringing unnecessary complications to the present paper.)

\begin{figure}[ht]
\centering
  \begin{tikzpicture}[scale=1.4]
    \tikzstyle{every node}=[defnode]
    \path (0,0) node (a) {};
    \path (1,0) node (b) {};
    \path (0,1) node (c) {};
    \path (1,1) node (d) {};
    \path (.5,2) node (e) {};
    \tikzstyle{every node}=[inner sep=2pt]
    \draw (a) -- node[below] {5}(b);
    \draw (a) -- node[left] {3}(c);
    \draw (b) -- node[right] {4}(d);
    \draw (c) -- node[above=-1pt] {1}(d);
    \draw (c) -- node[above left] {2}(e);
    \draw (d) -- node[above right] {17}(e);
    \tikzstyle{every node}=[]
    \path (2,.8) node(simeq){\huge $\simeq$};
    \path (.5,-.6) node (g1) {$\G_1$};
  \end{tikzpicture}
  ~~~
  \begin{tikzpicture}[scale=1.4]
    \tikzstyle{every node}=[defnode]
    \path (0,0) node (a) {};
    \path (1,0) node (b) {};
    \path (0,1) node (c) {};
    \path (1,1) node (d) {};
    \path (.5,2) node (e) {};
    \tikzstyle{every node}=[inner sep=2pt]
    \draw (a) -- node[below] {4}(b);
    \draw (a) -- node[left] {3}(c);
    \draw (b) -- node[right] {2}(d);
    \draw (c) -- node[above=-1pt] {1}(d);
    \draw (c) -- node[above left] {2}(e);
    \draw (d) -- node[above right] {3}(e);
    \path (.5,-.6) node (g1) {$\G_{2}$};
  \end{tikzpicture}
  ~~~~~~~~~~~~~~~~~~~~~~~~~~~~
  \begin{tikzpicture}[scale=1.4]
    \tikzstyle{every node}=[defnode]
    \path (0,0) node (a) {};
    \path (1,0) node (b) {};
    \path (0,1) node (c) {};
    \path (1,1) node (d) {};
    \path (.5,2) node (e) {};
    \tikzstyle{every node}=[inner sep=2pt]
    \draw (a) -- node[below] {1}(b);
    \draw (a) -- node[left] {2}(c);
    \draw (b) -- node[right] {6}(d);
    \draw (c) -- node[above=-1pt] {3}(d);
    \draw (c) -- node[above left] {4}(e);
    \draw (d) -- node[above right] {5}(e);
    \path (.5,-.6) node (g1) {$\G_{3}$};
  \end{tikzpicture}
  \caption{Left and middle: A temporal graph and its time-compressed version. Right: a full-range temporal graph.}
  \label{fig:isomorphic-graphs}
\end{figure}

For all matters concerning the reachability relation alone, temporal graphs and their time-compressed versions are equivalent because they have the same \emph{local} ordering of the edges. Analysing the compressed version directly may be convenient.

\begin{definition}[Full-range graph]
  Let $\G=(V,E,\lambda)$ be the compressed version of a simple and proper temporal graph (not necessarily a clique). Then $\G$ is \emph{full-range} if its lifetime $L$ is equal to $|E|$.
\end{definition}

An example of full-range graph is shown in Figure~\ref{fig:isomorphic-graphs} (right). Observe that being full-range implies, among other things, that every label is used exactly once.
In the following, we prove that full-range graphs in \textsf{TC} are pivotable and thus admit $O(n)$ spanners. Moreover, full-range \emph{cliques} are $\{1,2,3\}$-hop dismountable. The following lemma is instrumental to both results.

\begin{lemma}\label{lemma-fullrange}
  Let $\G=(V,E,\lambda)$ be a temporal graph (not necessarily a clique) and let $e$ and $f$ be two edges such that $\lambda(e) < \lambda(f)$. If $\G$ is full-range, then the endpoints of $e$ can reach the endpoints of $f$ by time $\lambda(f)$.
\end{lemma}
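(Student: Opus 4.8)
The plan is to exploit the defining property of a full-range graph: the $|E|$ labels are exactly $1, 2, \dots, |E|$, each used once, and every label $t \geq 2$ is carried by an edge adjacent to an edge of label $t-1$. The claim is that whenever $\lambda(e) < \lambda(f)$, every endpoint of $e$ can reach every endpoint of $f$ via a temporal path finishing no later than time $\lambda(f)$. I would prove this by induction on the ``distance'' $\lambda(f) - \lambda(e)$ between the two labels (equivalently, by strong induction on $\lambda(f)$ with $\lambda(e)$ ranging over all smaller labels).

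For the base case, take $\lambda(f) = \lambda(e) + 1$. Write $e = \{a, b\}$ and $f = \{c, d\}$. By the compressed/full-range property applied to $f$, the edge $f$ must be adjacent to the edge of label $\lambda(f) - 1 = \lambda(e)$, which is $e$ itself (labels are unique); hence $e$ and $f$ share an endpoint, say $b = c$. Then from $a$ the path $\langle(e,\lambda(e)),(f,\lambda(f))\rangle$ reaches both $c$ and $d$ by time $\lambda(f)$, from $b = c$ the single contact $(f,\lambda(f))$ suffices, and of course $d$ is reached trivially. So all endpoints of $e$ reach all endpoints of $f$ by time $\lambda(f)$.

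For the inductive step, suppose the statement holds for all pairs of edges whose label gap is smaller than $\lambda(f) - \lambda(e)$, and let $e, f$ be as in the statement with $\lambda(f) \geq \lambda(e) + 2$. By full-rangeness, $f$ is adjacent to the (unique) edge $g$ with $\lambda(g) = \lambda(f) - 1$; note $\lambda(e) \leq \lambda(g) < \lambda(f)$. If $\lambda(g) = \lambda(e)$ then $g = e$ and we are in the base-case situation again. Otherwise $\lambda(e) < \lambda(g) < \lambda(f)$, so by the induction hypothesis every endpoint of $e$ reaches every endpoint of $g$ by time $\lambda(g)$; then, since $g$ and $f$ share an endpoint $x$, we extend any such path to $x$ by the contact $(g,\lambda(g))$ (if it does not already end there) and then traverse $f$ at time $\lambda(f) > \lambda(g)$, arriving at both endpoints of $f$ by time $\lambda(f)$. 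This completes the induction.

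The main obstacle I anticipate is purely bookkeeping: making sure the temporal path delivered by the induction hypothesis actually \emph{ends} at the shared endpoint $x$ of $g$ (so that appending the contact on $g$ is legal) and that the times are non-decreasing at the splice points — the path from an endpoint of $e$ to $x$ arrives by time $\lambda(g)$, we use $g$ at time $\lambda(g)$, and then $f$ at time $\lambda(f) > \lambda(g)$, so monotonicity holds, but one should phrase the induction hypothesis as ``can reach \emph{each} endpoint of $g$ by a path ending at that endpoint by time $\lambda(g)$'' to avoid this being a gap. A minor degenerate case to check is when $e$ and $f$ (or $g$ and $f$) already share \emph{both} endpoints, which cannot happen here since the footprint is a simple graph, so distinct edges share at most one endpoint; and the case where an endpoint of $e$ coincides with an endpoint of $f$ is harmless since a temporal path of length zero reaches it for free, but one still wants to also exhibit a path that \emph{arrives via $f$} is not required — the statement only asks for reachability by time $\lambda(f)$.
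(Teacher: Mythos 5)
Your proof is correct and follows essentially the same route as the paper's: induction on $\lambda(f)-\lambda(e)$, using the full-range property to find the edge of label $\lambda(f)-1$ adjacent to $f$ and splicing the path from the induction hypothesis onto $f$, with the same care about whether the inherited path ends at the shared endpoint or must be rerouted there. No gaps.
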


\begin{proof}Let $t=\lambda(e)$ and $t'=\lambda(f)$.
  By induction on $k=t'-t$:

  If $t' = t+1$, then $e,f$ are adjacent and can be composed into a temporal path, so the endpoints of $e$ can reach the endpoints of $f$ at time $t+1$. This closes the base case.

  Induction: Consider $e_{t+k}$ and $e_{t+k-1}$, the edges with label $t+k$ and $t+k-1$ respectively.
  Since the graph is full-range, these two edges share a common node $w$.  By induction hypothesis, there is a temporal path $P$ from $e$ to $e_{t+k-1}$.
  Either $P$ ends in $w$ and can be composed with $e_{t+k}$ directly, or $w$ is the penultimate node of $P$,
  in which case a temporal path $P'$ can be constructed by keeping all the edges of $P$ but $e_{t+k-1}$ and by adding $e_{t+k}$.
\end{proof}

\begin{theorem}
  \label{full-range-pivotable}
  Full-range temporal graphs in \textsf{TC} (not necessarily cliques) are pivotable. 
\end{theorem}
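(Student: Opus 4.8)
The goal is to exhibit a pivot edge. The natural candidate is $e_1$, the edge with label $1$ (which exists and is unique since the graph is full-range and hence compressed with every label used exactly once), or symmetrically $e_{|E|}$, the edge with the largest label. I will aim to show that $e_1$ can reach every node, and that every node can reach $e_{|E|}$; then a small additional argument links the two into a single pivot edge. Actually the cleanest route is to pick $e_1$ as the pivot and show both directions relative to it, using Lemma~\ref{lemma-fullrange} as the workhorse in each direction.

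\textbf{First direction (every node reaches the pivot).} Fix any node $v$. Since $\G \in \textsf{TC}$, $v$ has at least one incident edge; let $e$ be any incident edge with the \emph{smallest} label among edges incident to $v$. I claim $v$ reaches $e_{|E|}$. If $e = e_{|E|}$ we are done; otherwise $\lambda(e) < \lambda(e_{|E|}) = |E|$, and Lemma~\ref{lemma-fullrange} gives a temporal path from an endpoint of $e$ to an endpoint of $e_{|E|}$ arriving by time $\lambda(e_{|E|})$. I still need that $v$ itself, not merely ``an endpoint of $e$'', starts this path: here I use that $e$ is the earliest edge at $v$, so $v$ can wait at $v$ and depart along $e$ at time $\lambda(e)$, hence $v$ can reach both endpoints of $e$ and thereby feed into the Lemma's path. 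This shows every node reaches $e_{|E|}$ by its deadline $|E|$. The symmetric argument, using the edge of \emph{largest} label incident to $v$ together with the fact that $e_1$ is the globally earliest edge, shows that $e_1$ reaches every node: from $e_1$ one reaches (by Lemma~\ref{lemma-fullrange}) an endpoint of the latest edge incident to $v$, and from there $v$ is reachable since that edge is $v$'s latest incident edge.

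\textbf{Stitching the two edges together.} So far $e_1$ reaches everyone and everyone reaches $e_{|E|}$, but the definition of a pivot edge requires a \emph{single} edge with both properties. To fix this, apply Lemma~\ref{lemma-fullrange} once more to $e = e_1$ and $f = e_{|E|}$ (legitimate since $\lambda(e_1) = 1 < |E| = \lambda(e_{|E|})$, assuming $|E| \ge 2$, which holds whenever the graph has an edge other than a single one --- and a one-edge temporally connected graph is trivially pivotable): this gives a temporal path $Q$ from $e_1$ to $e_{|E|}$. Now take $e_1$ as the pivot. Every node $v$ reaches $e_{|E|}$ (shown above), and... no, that is the wrong direction. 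Let me instead observe the clean statement: \emph{every} node reaches $e_{|E|}$, and $e_{|E|}$ is reached at the last time step, so in fact I should rather argue directly that $e_{|E|}$ is a pivot edge. Every node reaches $e_{|E|}$ (first direction). For the other direction, I need $e_{|E|}$ to reach every node, i.e.\ a path \emph{starting} with $e_{|E|}$ at time $|E|$; but nothing happens after time $|E|$, so this can only reach the two endpoints of $e_{|E|}$ --- generally false. Hence $e_1$ is the correct pivot: $e_1$ reaches every node (second direction above). The remaining obligation is that every node reaches $e_1$, which cannot be literally true for the same timing reason. The resolution is that the definition's two requirements are about \emph{different} traversal times and the pivot need only satisfy both as stated; the honest fix is to replace ``$e_1$'' / ``$e_{|E|}$'' by whichever edge the proof actually certifies, and I expect the intended pivot to be an edge in the ``middle'' whose label is reached by everyone from below and which reaches everyone above. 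Concretely: let $f$ be an edge such that every node reaches $f$ by $\lambda(f)$ and $f$ reaches every node after $\lambda(f)$; I will locate such an $f$ by combining the two one-sided reachability facts --- everyone reaches $e_{|E|}$ and $e_1$ reaches everyone --- with the Lemma-\ref{lemma-fullrange} path $Q$ from $e_1$ to $e_{|E|}$, and taking $f$ to be any edge on $Q$, then checking both directions via concatenation. The main obstacle, and the step requiring the most care, is precisely this last synthesis: verifying that concatenating the ``incoming'' paths (node $\to e_{|E|}$) with a suitable suffix, and prefixes of the ``outgoing'' paths ($e_1 \to$ node) with $Q$, produces genuinely chronological paths through the chosen edge $f$ at time $\lambda(f)$. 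Once the right $f$ is identified, each check is a routine application of Lemma~\ref{lemma-fullrange} and the earliest/latest-incident-edge trick.
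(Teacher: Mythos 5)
Your two one-sided facts are correct, and each is proved essentially the way the paper proves its corresponding half (Lemma~\ref{lemma-fullrange} applied to the earliest, respectively latest, edge incident to an arbitrary node). You also correctly diagnose that neither $e_1$ nor $e_{|E|}$ can serve as the pivot. But the proof never closes: the final ``synthesis'' step, which you yourself flag as the main obstacle, is precisely the missing idea, and the fix you sketch would not work. Taking $f$ to be ``any edge on $Q$'' (the path from $e_1$ to $e_{|E|}$) does not help: the fact that every node reaches $e_{|E|}$ by time $|E|$ says nothing about reaching an intermediate edge $f$ by the \emph{earlier} deadline $\lambda(f)$, and the fact that $e_1$ reaches every node says nothing about $f$ reaching every node via paths departing at time $\lambda(f)$. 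Concatenating with $Q$ only moves the difficulty around; a generic edge of $Q$ has no reason to satisfy either requirement.

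The missing ingredient is an extremal choice of edge that makes both directions reduce to your two one-sided arguments simultaneously. The paper takes $e=\max\{e^-(v):v\in V\}$, the latest among the locally earliest edges, with $u$ a node realizing it. Then (i) every other node $v$ satisfies $\lambda(e^-(v))<\lambda(e)$, so by Lemma~\ref{lemma-fullrange} it reaches the endpoints of $e$ by time $\lambda(e)$, and since $u$ has no incident edge before $\lambda(e)$ any such path either arrives through $e$ or can be extended by traversing $e$ at time $\lambda(e)$; and (ii) since $\G\in\textsf{TC}$, $u$ reaches every node, and any such path can be prefixed by a traversal of $e$ at time $\lambda(e)$ because $e$ is $u$'s earliest incident edge. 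Your draft contains all the tools needed for both checks, but it never selects this edge, so as written the argument is incomplete.
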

\begin{proof}
  Let $e=\max\{e^-(v):v\in V\}$ and let $u\in V$ such that $e^-(u)=e$.
  By Lemma~\ref{lemma-fullrange}, every other node can reach $u$ by time $t$ and through $e^-(u)$ (since $u$ has no other incident edges before).
  As the graph is temporally connected, $u$ can also reach every other nodes, and since $e$ is its first edge, any temporal path from $u$ to another node either starts with $e$ or can be composed with it.
\end{proof}

\begin{theorem}
  \label{full-range-dismountable}
  Full-range temporal cliques are $\{1,2,3\}$-hop dismountable.
\end{theorem}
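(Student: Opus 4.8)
I want to show that every full-range temporal clique $\G$ has a $\{1,2,3\}$-hop dismountable node. The natural candidate is the node $u$ whose earliest edge is latest among all earliest edges, i.e. $e = \max\{e^-(v) : v \in V\}$ and $e = e^-(u)$ — the same node that served as the pivot-producing node in Theorem~\ref{full-range-pivotable}. The idea is to argue: (i) $u$ can deliver its emissions to some node through that node's earliest edge, via a short temporal path; and (ii) $u$ can receive from some node that departs through its latest edge, again via a short temporal path. Symmetrically, the node $u'$ with $e^+(u') = \min\{e^+(v) : v \in V\}$ is a natural candidate for the receiving side, and combining the two one may hope one of them is $\{1,2,3\}$-hop dismountable.

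**Key steps.** First, I would set $e = uw$ with $\lambda(e) = t = \max\{e^-(v):v\in V\}$. By maximality, every other node $x$ satisfies $e^-(x) \le t$, hence by Lemma~\ref{lemma-fullrange} (since $\G$ is full-range) every node can reach $u$ by time $t$, and because $u$ has no incident edge before $t$, this arrival is necessarily through $e = e^-(u)$. So $u$ \emph{receives} from everyone through its own earliest edge — but this is the wrong direction for dismounting on the $P^-$ side: for $P^-$ we need $u$ to \emph{reach} some other node $v$ arriving through $e^-(v)$. So instead I would exploit the edge $f = \max\{e^+(v):v\in V\}$, say $f = e^+(z)$ with $\lambda(f) = t' \ge t$ (note $t' \ge t$ since $f$ must be at least as late as $e$). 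Because $t \le t'$, Lemma~\ref{lemma-fullrange} gives a temporal path from $e$'s endpoints (in particular from $u$) to $z$ arriving by time $t'$. I then need to arrange that the arrival at $z$ is through $e^-(z)$, which is generally false; this is where the argument has to be more careful. The cleaner route is: since $\G$ is a clique and full-range, look at the edge $e_{L}$ with the largest label $L = |E|$. Its two endpoints, call them $a,b$: the node $a$ has $e_L$ as \emph{nobody's} earliest edge and it is the latest edge of both $a$ and $b$. Take $u$ with $e^-(u) = e = \max e^-$: I claim $u$ is $\{1,2,3\}$-hop dismountable. For the $P^-$ direction, let $v$ be the node with $e^-(v)$ of \emph{smallest} label after $t$; I want a length-$\le 3$ temporal path from $u$ that arrives at $v$ through $e^-(v)$, using that labels between $t$ and $\lambda(e^-(v))$ form a chain of adjacent edges by full-rangeness. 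For the $P^+$ direction, symmetrically use the node $w'$ with $e^+(w')$ of \emph{largest} label before $t'$, reached-from via the full-range chain near the top of the timeline.

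**The main obstacle.** The delicate point is controlling the \emph{hop length} — Lemma~\ref{lemma-fullrange} only promises \emph{some} temporal path, potentially long, whereas dismountability needs length $\le 3$. So the crux is to show that for the specific extremal node $u$, the relevant delegation paths can be shortened to $\le 3$ hops. I expect this to follow from combining the full-range chain structure (consecutive labels share a vertex) with the extremality of $u$'s earliest edge and the clique property: because $u$'s earliest edge is so late, there are very few edges strictly between it and the relevant target/source edges, so the chain of consecutive-label edges that Lemma~\ref{lemma-fullrange} walks along is short — ideally of length $\le 2$ on each side, giving a path of length $\le 3$ after attaching $e^-(u)$ or $e^+(w)$. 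Making this counting precise — in particular ruling out the case where many edges crowd into the interval just after $t$ by using that $u$'s part $V^-$ vs.\ $V^+$ structure from Theorem~\ref{lemma:2-hop-summary} constrains things (or, if $\G$ is already $\{1,2\}$-hop dismountable, we are done immediately) — is the part of the proof that will require genuine care rather than routine bookkeeping.
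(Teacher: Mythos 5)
There is a genuine gap, and it stems from two things: you are trying to dismount the wrong node, and you are fighting an unnecessary battle over hop lengths.

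First, the direction issue you flag (``$u$ receives from everyone through its own earliest edge --- but this is the wrong direction'') is not a dead end; it is the key fact, just applied to a different node. If every node $v\neq u$ can reach $u$ via a temporal path arriving through $e^-(u)$ (which follows from Lemma~\ref{lemma-fullrange} plus the maximality of $e^-(u)$, since $u$ has no earlier incident edge), then this path is exactly a valid $P^-$ for dismounting \emph{$v$}, not $u$: it lets $v$ delegate its emissions to $u$. The paper then picks $v$ to be an endpoint of the globally latest edge $f=vw$ with $v\neq u$; since $f$ is everyone's latest incident edge at its endpoints, the single edge $f=e^+(w)$ is a $1$-hop $P^+$ delegating $v$'s receptions to $w$. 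So $v$ is $k$-hop dismountable for \emph{some} $k$, with no control on $k$ whatsoever.

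Second, the obstacle you identify as ``the crux'' --- shortening the delegation paths to length $\le 3$ --- does not need to be solved at all. Theorem~\ref{thm:threehopdismountable} states that $k$-hop dismountability for any $k>3$ already implies $\{1,2,3\}$-hop dismountability, so once some node is $k$-hop dismountable for some $k$, you are done. Your plan instead attempts a direct counting argument on the full-range chain of consecutive labels to force paths of length $\le 3$, and this is unlikely to work: the chain between $\lambda(e^-(v))$ and $\lambda(e^-(u))$ can be arbitrarily long in a full-range clique, and nothing about the extremality of $u$ bounds its length. So as written, the proposal's central step would fail, while the step it discards as ``the wrong direction'' is the one that completes the proof.
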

\begin{proof}
  If $|V|<3$, nothing needs to be proved, so let $|V|\ge 3$.
  Let $u\in V$ be such that $e^-(u)$ is maximum. By Lemma~\ref{lemma-fullrange}, every other node $v$ can reach $u$,
  starting from the edge $e^-(v)$ and arriving through $e^-(u)$ itself (since $u$ has no other edges before that and the lemma implies that $u$ is reached by time $\lambda(e^-(u))$).
  Thus, each node can delegate its emissions to $u$. Now, let $f=vw$ be the edge whose label is globally the maximum. Either $v$ or $w$ is not $u$. W.l.o.g., let $v$ be this node.
  Then $v$ can be reached by $w$ through $e^+(w)$. It is thus $k$-hop dismountable for some $k$, with respect to $u$ (delegation of emissions) and $w$ (delegation of receptions).
  From this, we can conclude by using Theorem~\ref{thm:threehopdismountable}.
\end{proof}

Unfortunately, the clique one obtains after $k$-hop dismounting a full-range clique may no longer be full-range, so Theorem~\ref{full-range-dismountable} does not imply that full-range cliques are \emph{recursively} $k$-hop dismountable. However, this is only a minor concern, as Theorem~\ref{full-range-pivotable} implies that full-range cliques admit $2n-3$ spanners. More generally, these results seem to indicate that having a large (compressed) lifetime facilitate the existence of sparser spanners. Further investigations are needed to determine to what extent this could be exploited in a more relaxed version, e.g. in terms of excluding instances that contain either large full-range sub-cliques or large subgraphs which are full-range and in \textsf{TC}.

\section{Conclusion}
\label{sec:conclusion}

In this paper, we revisited the concept of dismountability in a methodical manner, characterizing the properties that non $k$-hop dismountable cliques must have and showing that if a clique is $k$-hop dismountable for any $k$, then it is also $k$-hop dismountable with $k$ in $\{1,2,3\}$. We also showed that excluding $\{1,2\}$-hop dismountability is sufficient for reducing the spanner problem to its bi-clique version. Our results have several implications. One is that the existence of spanners of size $O(n \log n)$ can now be proven entirely using dismountability, establishing dismountability as a central technique for the problem. Another one is that any minimal counter-example to spanners of size $4n$ must possess all the structure of non $\{1,2,3\}$-hop dismountable cliques. We do not know yet if temporal cliques admit $O(n)$ spanners, thus the latter result might not appear very decisive. However, if the existence of $O(n)$ spanners were to be established, it would immediately play a significant role for pinning down the multiplicative constant, which is a relevant question in a structural context (perhaps less so when applied to time complexity). We also established a connection between dismountability and pivotability, both being well-known techniques that were so far unrelated. Finally, we investigated the properties of temporal graphs whose lifetime is maximum, namely full-range temporal graphs, showing that full-range cliques are dismountable and more generally, full-range temporally connected graphs are pivotable. This new class of temporal graphs may be of independent interest.

Obviously, the main open question remains whether temporal cliques admit linear-size spanners. We believe that our results are of interest beyond this question, due to the versatility of the dismountability principle. Another question is whether (and to which extent) dismountability could be relaxed. To motivate the question, observe that the type of spanner one gets by recursively $1$-hop dismounting a clique, when recursion works entirely, is a temporal graph whose footprint is a maximal $2$-degenerate graph (also known as $2$-arch graph). Intriguingly, preliminary experiments have shown that such spanners often exist even in cliques where the recursion fails (in fact, no counter-example was found). At the very least, this indicates that a more relaxed version of dismountability might be applicable more generally, which is worth investigating.

We conclude by giving an example of temporal clique that is neither pivotable nor $k$-hop dismountable (whatever $k$). This clique nonetheless admits a 2-arch spanner (thus, of size $2n - 3$) and another spanner of size $2n - 4$, the search of which we leave as a fun exercise to the reader.

\begin{figure}[h]
\centering
\begin{tikzpicture}[xscale=2.4,yscale=1.6]
 \tikzstyle{every node}=[inner sep=2pt,circle,black,fill=darkgray]
 \path (0,0) node (5){}; 
 \path (0,1) node (7){}; 
 \path (0,2) node (4){}; 
 \path (0,3) node (6){}; 

 \path (1,0) node (2){}; 
 \path (1,1) node (1){}; 
 \path (1,2) node (0){}; 
 \path (1,3) node (3){}; 

 \tikzstyle{every node}=[]
 \path (-1,2) node{\Large $V^+$};
 \path (2,2) node{\Large $V^-$};

 \tikzstyle{label}=[circle,fill=white,inner sep=1pt,font=\scriptsize,pos=.3]

 \draw[lightgray] (0) -- node[label,pos=.8]{10} (5);
 \draw[lightgray] (0) -- node[label,pos=.8]{11} (7);
 \draw[lightgray] (1) -- node[label]{16} (4);
 \draw[lightgray] (1) -- node[label]{17} (6);
 \draw[lightgray] (2) -- node[label]{12} (4);
 \draw[lightgray] (2) -- node[label]{13} (6);
 \draw[lightgray] (3) -- node[label]{14} (5);
 \draw[lightgray] (3) -- node[label]{15} (7);

 \draw[thick,red] (6) -- node[below=-2pt,pos=.1]{\tiny $+$} node[above=-2pt,pos=.92]{\Large $+$} node[label]{19} (0);
 \draw[thick,red] (4) -- node[below=-2pt,pos=.1]{\tiny $+$} node[below=-2pt,pos=.92]{\Large $+$} node[label]{20} (3);
 \draw[thick,red] (7) -- node[below=-2pt,pos=.1]{\tiny $+$} node[above=-2pt,pos=.92]{\Large $+$} node[label]{21} (2);
 \draw[thick,red] (5) -- node[below=-2pt,pos=.1]{\tiny $+$} node[below=-2pt,pos=.92]{\Large $+$} node[label]{18} (1);
 \draw[thick,green] (6) -- node[above=-4pt,pos=.12]{\Large $-$} node[above=-2pt,pos=.88]{\tiny $-$} node[label]{7} (3);
 \draw[thick,green] (4) -- node[below=-4pt,pos=.12]{\Large $-$} node[above=-2pt,pos=.88]{\tiny $-$} node[label]{8} (0);
 \draw[thick,green] (7) -- node[below=-5pt,pos=.12]{\Large $-$} node[above=-2pt,pos=.88]{\tiny $-$} node[label]{6} (1);
 \draw[thick,green] (5) -- node[below=-3pt,pos=.12]{\Large $-$} node[above=-2pt,pos=.88]{\tiny $-$} node[label]{9} (2);

 \tikzstyle{label}=[circle,fill=white,inner sep=1pt,font=\scriptsize,pos=.5] 

 \tikzstyle{every edge}=[draw,bend left]
 \draw[thick,red,bend left=15] (5) edge node[left=-2pt,pos=.13]{\Large $+$} node[label]{26} (7);
 \draw[lightgray] (5) edge node[label]{22} (4);
 \draw[lightgray] (5) edge node[label]{25} (6);
 \draw[thick,red,bend left=15] (7) edge node[left=-2pt,pos=.87]{\Large $+$} node[label]{24} (4);
 \draw[thick,red] (7) edge node[left=-2pt,pos=.15]{\Large $+$} node[left=-2pt,pos=.85]{\Large $+$} node[label]{27} (6);
 \draw[lightgray,bend left=15] (4) edge node[label]{23} (6);

 \tikzstyle{every edge}=[draw,bend right]
 \draw[lightgray,bend left=15] (2) edge node[label]{3} (1);
 \draw[thick,green] (2) edge node[right=-2pt,pos=.18]{\Large $-$} node[label]{1} (0);
 \draw[lightgray] (2) edge node[label]{5} (3);
 \draw[thick,green,bend left=15] (1) edge node[right=-2pt,pos=.18]{\Large $-$} node[right=-2pt,pos=.88]{\Large $-$} node[label]{0} (0);
 \draw[lightgray] (1) edge node[label]{4} (3);
 \draw[thick,green,bend left=15] (0) edge node[right=-2pt,pos=.88]{\Large $-$} node[label]{2} (3);

\end{tikzpicture}
\caption{A temporal clique that is neither pivotable nor $k$-hop dismountable.}
\label{fig:ndis-npiv-8}
\end{figure}

\bibliographystyle{plain}
\bibliography{dismountability}
\end{document}